\documentclass[11pt]{article}

\usepackage[margin=1in]{geometry}
\usepackage{amsmath,amssymb,amsthm,mathtools}
\usepackage{aliascnt}
\usepackage{microtype}
\usepackage[table,dvipsnames]{xcolor}
\usepackage{hyperref}
\usepackage[nameinlink,capitalise,noabbrev]{cleveref}

\hypersetup{
    colorlinks=true,
    linkcolor=blue,
    citecolor=ForestGreen,
    urlcolor=blue!60!black
}

\numberwithin{equation}{section}

\newtheorem{theorem}{Theorem}[section]
\newaliascnt{lemma}{theorem}
\newtheorem{lemma}[lemma]{Lemma}
\aliascntresetthe{lemma}
\newaliascnt{proposition}{theorem}

\aliascntresetthe{proposition}
\newaliascnt{corollary}{theorem}
\newtheorem{corollary}[corollary]{Corollary}
\aliascntresetthe{corollary}
\theoremstyle{definition}
\newaliascnt{definition}{theorem}
\newtheorem{definition}[definition]{Definition}
\aliascntresetthe{definition}
\newaliascnt{assumption}{theorem}
\newtheorem{assumption}[assumption]{Hypothesis}
\aliascntresetthe{assumption}
\newaliascnt{remark}{theorem}

\aliascntresetthe{remark}

\crefname{theorem}{Theorem}{Theorems}
\Crefname{theorem}{Theorem}{Theorems}
\crefname{lemma}{Lemma}{Lemmas}
\Crefname{lemma}{Lemma}{Lemmas}
\crefname{proposition}{Proposition}{Propositions}
\Crefname{proposition}{Proposition}{Propositions}
\crefname{corollary}{Corollary}{Corollaries}
\Crefname{corollary}{Corollary}{Corollaries}
\crefname{definition}{Definition}{Definitions}
\Crefname{definition}{Definition}{Definitions}
\crefname{assumption}{Hypothesis}{Hypotheses}
\Crefname{assumption}{Hypothesis}{Hypotheses}
\crefname{remark}{Remark}{Remarks}
\Crefname{remark}{Remark}{Remarks}
\crefname{section}{Section}{Sections}
\Crefname{section}{Section}{Sections}
\crefname{equation}{Equation}{Equations}
\Crefname{equation}{Equation}{Equations}

\DeclarePairedDelimiter\norm{\lVert}{\rVert}

\newcommand{\R}{\mathbb{R}}
\newcommand{\eps}{\varepsilon}
\newcommand{\one}{\mathbf{1}}
\newcommand{\supp}{\operatorname{supp}}

\newcommand{\E}{\mathbb{E}}
\newcommand{\SSE}{\mathsf{SSE}}

\title{The Condition-Number Barrier in Sparse Least Squares}
\author{}
\date{}

\author{
Honghao Lin\footnote{Google Research, Carnegie Mellon University / Texas A\&M University. \texttt{honghaol3010@gmail.com}}
\and
Vahab Mirrokni\footnote{Google Research. \texttt{mirrokni@google.com}}
\and
David P. Woodruff\footnote{Google Research and Carnegie Mellon University. \texttt{dpwoodru@gmail.com }}
}

\begin{document}
\maketitle

\begin{abstract}
In~\cite{axiotis2021sparse}, Axiotis and Sviridenko conjectured that the linear
dependence on the restricted condition number in sparse convex optimization
cannot be improved by a polynomial-time algorithm.  We establish their
conjectured lower bound for least-squares objectives, conditional on the
randomized exact-volume Small-Set Expansion Hypothesis in the weighted
regular-graph formulation of Raghavendra, Steurer, and
Tulsiani~\cite{raghavendra2012reductions}.
Concretely, for every fixed
$\gamma\in(0,1]$, there is no randomized polynomial-time algorithm that, with
probability at least $2/3$, returns a vector $x$ such that, writing
$s=\norm{x}_0$,
\[
    \norm{Ax-b}_2^2
    \leq \min_{\norm{z}_0\leq k}\norm{Az-b}_2^2+\eps
    \quad\text{and}\quad
    s=O\!\left(k\,\kappa_{s+k}^{\,1-\gamma}\right),
\]
where $\kappa_r$ is the restricted condition number at sparsity level $r$.
The result holds even on rational instances with $A$ of full column rank.

The proof was first obtained using a fully automated Gemini-based agentic system developed internally at Google. The authors have verified the proof and edited it for clarity of presentation.
\end{abstract}

\section{Introduction}
\label{sec:introduction}

Sparse convex optimization seeks a vector with few nonzero coordinates and
small objective value.  Given a target sparsity $k$ and an accuracy $\eps>0$,
the bicriteria goal is to attain an objective value no greater than the best
$k$-sparse value plus $\eps$, while allowing the output to use more than $k$
coordinates.  We study this question for the least-squares objective
\begin{equation}
    Q(x)=\norm{Ax-b}_2^2,
    \label{eq:least-squares-objective}
\end{equation}
Even the closely related task of finding the sparsest vector subject to a
prescribed residual tolerance is NP-hard~\cite{natarajan1995sparse}.  The
central question here is how much additional support suffices for an efficient
additive approximation.

The relevant structural parameter is the restricted condition number.  For
least squares, let $\kappa_r$ denote the ratio between the maximum and minimum
values of $\norm{Ah}_2^2$ over $r$-sparse unit vectors $h$.  Suppressing the
sparsity level in the notation, earlier analyses gave either a support bound
linear in $\kappa$ with an additional logarithmic dependence on the desired
accuracy~\cite{natarajan1995sparse,shalev2010trading}, or an
accuracy-independent $O(k\kappa^2)$
bound~\cite{shalev2010trading,jain2014iterative}.  Axiotis and Sviridenko
introduced Adaptively Regularized Hard Thresholding and obtained an
accuracy-independent $O(k\kappa)$ bound.  Their algorithm is randomized, and
the guarantee holds with high probability~\cite{axiotis2021sparse}.  They later
gave a deterministic adaptive-regularization method with the same
$O(k\kappa)$ support dependence and an improved
runtime~\cite{axiotis2022iterative}.

Axiotis and Sviridenko discuss in Section~5.1 of their
paper~\cite{axiotis2021sparse} a construction that appears in Appendix~B of the
full version of Foster, Karloff, and Thaler~\cite{foster2015variable}.  They
interpret it as an obstruction matching the linear dependence on $\kappa$ for
several greedy and hard-thresholding methods.  The construction is
algorithm-specific and does not rule out a different polynomial-time
algorithm.  Axiotis and Sviridenko therefore formulated the remaining question
as Conjecture~27~\cite{axiotis2021sparse}.  Writing $\gamma$ for the proposed
improvement in the condition-number exponent, the conjecture rules out any
polynomial-time algorithm that retains the additive objective guarantee while
returning a vector whose sparsity $s$ satisfies
\begin{equation}
    s=O\!\left(k\kappa_{s+k}^{1-\gamma}\right).
    \label{eq:conjectured-sparsity}
\end{equation}
The occurrence of $s$ on both sides is intentional: the restricted condition
number is evaluated at the combined target and output support sizes.

\subsection{Our result}

We establish this general computational barrier conditional on the randomized
exact-volume Small-Set Expansion Hypothesis in the weighted regular-graph
formulation of Raghavendra, Steurer, and
Tulsiani~\cite{raghavendra2012reductions}, as formalized in
\cref{hyp:exact-sse}.  The conclusion covers arbitrary randomized
polynomial-time algorithms.  The hardness holds even when $A$ and $b$ have
rational entries and $A$ has full column rank.

The hypothesis asks one to distinguish a regular graph containing a
low-expansion set of one fixed volume from a graph in which every set of
exactly that volume has expansion close to one.  This exact-volume YES/NO
structure is the $\SSE(\eta,\delta)$ formulation of Raghavendra, Steurer, and
Tulsiani~\cite{raghavendra2012reductions}, where the prescribed cardinality is
$k=\delta n$.  Our hypothesis
considers randomized hardness on rational weighted regular graphs.  The
exact-size character is central to the proof: the core argument derives the
needed control over arbitrary returned supports from expansion at the single
target volume.

The formal statement below makes the hidden constant in
\cref{eq:conjectured-sparsity} explicit.  As usual, that constant may depend
on the algorithm and on the fixed exponent $\gamma$, but not on the input.

\begin{theorem}[Main theorem]
\label{thm:main-intro}
Assume the randomized weighted exact-volume Small-Set Expansion Hypothesis
stated in \cref{hyp:exact-sse}.  Fix constants $\gamma\in(0,1]$ and $C\geq1$.
There is no randomized algorithm, with running time polynomial in the binary
encoding length of $(A,b,k,\eps)$ and in $1/\eps$, that, given
$A\in\mathbb{Q}^{m\times n}$, $b\in\mathbb{Q}^m$,
$k\in\{1,\ldots,n\}$, and $\eps\in\mathbb{Q}_{>0}$, with $A$ of full column
rank, returns $x\in\mathbb{Q}^n$ such that, with probability at least $2/3$,
writing $s=\norm{x}_0$, both
\begin{align}
    \norm{Ax-b}_2^2
    &\leq \min_{\norm{z}_0\leq k}\norm{Az-b}_2^2+\eps,
    \label{eq:main-objective-guarantee}\\
    s&\leq Ck\,
        \kappa_{s+k}^{1-\gamma}.
    \label{eq:main-support-guarantee}
\end{align}
Consequently, conditional on this hypothesis, the algorithmic guarantee ruled
out by Conjecture~27 of Axiotis and Sviridenko is impossible for every fixed
$0<\gamma\leq1$.
Moreover, for each fixed $(C,\gamma)$, there is a constant
$K_{C,\gamma}<\infty$ such that the hard instances may be chosen to satisfy
\[
    \kappa_r(A)\leq K_{C,\gamma}
    \qquad\text{for every }r,
\]
and every row of $A$ has at most two nonzero entries.
\end{theorem}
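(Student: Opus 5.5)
We reduce from the exact-volume SSE problem of \cref{hyp:exact-sse}. Given a weighted $d$-regular graph $G=(V,E,w)$ on $n$ vertices and target size $k=\delta n$, we build a least-squares instance whose matrix $A$ stacks two blocks. The first block is the (weighted) edge--vertex incidence matrix, scaled by a parameter $\lambda$: one row $\sqrt{\lambda w_{uv}}(e_u-e_v)$ per edge. The second block is $\sqrt{\mu}\,I_n$. Every row therefore has at most two nonzeros, and the identity block makes $A$ full column rank. We take $b=(0,\sqrt{\mu}\,\beta\one)$ for a scalar $\beta>0$, so that
\[
Q(x)=\lambda\, x^\top L_G x+\mu\norm{x-\beta\one}_2^2 .
\]
Since $0\preceq L_G\preceq 2d I$, every restricted eigenvalue of $A^\top A$ lies in $[\mu,\mu+2d\lambda]$. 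Choosing $\lambda d/\mu=\Theta(1)$, depending only on $(C,\gamma,\eta,\delta)$, gives $\kappa_r\le K_{C,\gamma}$ uniformly in $r$. This yields the ``moreover'' clause and makes the support bound \eqref{eq:main-support-guarantee} read $s\le C'k$ for a constant $C'=CK_{C,\gamma}^{1-\gamma}$. The game is therefore: with constant condition number, show it is SSE-hard to reach the additive-$\eps$ optimum using only $C'k$ coordinates, for every fixed $C'$.

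\textbf{Step 1: OPT in the two cases.} For fixed support $S$, the optimal $x_S$ solves $(\lambda L_{G}[S]+\lambda D_{\partial S}+\mu I)x_S=\mu\beta\one$. Its value is $\mu\beta^2(n-|S|)$ plus a term governed by the boundary: restricted to $S$, the Laplacian penalizes $x_u^2$ times the weight of edges leaving $S$.

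In the YES case, plug $x=\beta'\one_S$ for the low-expansion set $S$ and optimize $\beta'$. The per-vertex savings is then roughly $\mu\beta^2\cdot\frac{\mu}{\mu+\lambda d\,\phi(S)}$, which is close to $\mu\beta^2$ when $\phi(S)\le\eta$.

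In the NO case, we need a lower bound for \emph{every} support $T$ with $|T|\le C'k$, using only expansion at the single volume $\delta$. This is the main obstacle. The plan is an averaging argument: any set $T$ of size $\le C'k$ can be randomly subsampled, or covered by random $k$-subsets, and by linearity the cut weight of $T$ relates to the average cut of its $k$-subsets. Since all size-$k$ sets have expansion $\ge 1-\eta'$, the sets inside $T$ are nearly non-expanding only internally in a controlled way, which forces $T$ itself to have large boundary. We also must handle non-indicator $x_T$. For this we use a threshold/co-area (Cheeger-type sweep) decomposition $x=\int \one_{\{x>t\}}dt$, reducing $x^\top L x$ to cut weights of level sets of size $\le C'k$.

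If subsampling loses too much for sets much larger than $k$, we would instead pass through the RST reduction, which shows SSE hardness implies hardness for all set sizes in $[k,C'k]$ with gap parameters degrading only in the constants. Choosing $\eta$ small relative to $C'$ is allowed since the hypothesis holds for all small $\eta$.

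\textbf{Step 2: the gap.} Combining the two cases, YES has $\mathrm{OPT}_k\le \mu\beta^2(n-k)+\mu\beta^2 k\,O(\eta\lambda d/\mu)$. In NO, any $C'k$-sparse $x$ has value $\ge \mu\beta^2 n-\mu\beta^2 C'k\cdot\frac{\mu}{\mu+\lambda d(1-\eta')}$. Taking $\lambda d/\mu$ a large enough constant, say larger than $2C'$, makes the NO lower bound exceed the YES upper bound by $\Omega(\mu\beta^2 k)$.

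Scaling $\beta$ so that this gap exceeds $2\eps$, with rational entries and polynomial bit length, a $2/3$-success algorithm distinguishes YES from NO. It does so by computing $Q(x)$ and checking $\norm{x}_0\le C'k$, contradicting \cref{hyp:exact-sse}. One subtlety remains: $\kappa$ grows with $\lambda d/\mu$ while $C'$ depends on $\kappa$. This circularity resolves because $\kappa\le 1+2\lambda d/\mu=:K$ and $C'=CK^{1-\gamma}$, so we need $K\gtrsim CK^{1-\gamma}$, i.e.\ $K^{\gamma}\gtrsim C$. This is exactly where $\gamma>0$ is used.
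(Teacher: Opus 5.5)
\textbf{Verdict: genuine gap in the NO case.} Your construction, the YES bound, the uniform bound $\kappa_r\le 1+2\lambda d/\mu$, and the resolution of the self-reference via $K^\gamma\gtrsim C$ all match the paper. However, the NO case, which you flag as the main obstacle, is not proved, and the route you sketch does not work as stated.

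For the principal submatrix $L_T$ on $T=\supp(x)$, the co-area formula produces $w_G(X_t,V\setminus X_t)=|X_t|\,\Phi_G(X_t)$ for each level set $X_t\subseteq T$. A Cheeger sweep therefore needs $\min_{\varnothing\neq X\subseteq T}\Phi_G(X)$ to be large. Your averaging argument controls only sets of size at least $k$: for $k\le|T|\le C'k$, random $k$-subsets give $\Phi_G(T)\ge 1-O(C'\eta)$. For $|X|<k$, the exact-volume promise yields only the padding bound $|X|\bigl(1-\Phi_G(X)\bigr)\le\eta k$, which is an absolute rather than a relative bound.

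So the promise does not exclude a small, nearly disconnected cluster of $\Theta(\eta k)$ vertices inside the support. In that case the Dirichlet Cheeger constant of $T$ is near $0$. On that cluster the optimum saves essentially the full $\mu\beta^2$ per vertex, not the factor $\mu/(\mu+\lambda d(1-\eta'))$. Your NO bound charges all $C'k$ support vertices that factor, so it is unjustified, and no sweep over $T$ can deliver it. (Even when every subset of $T$ expands, Cheeger gives $\lambda_{\min}\ge h^2/2$, not $1-\eta'$; that loss is harmless after adjusting constants.) The RST fallback changes the hypothesis and still says nothing about sets much smaller than $k$.

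\textbf{The missing idea} is to split off the nonexpanding part of the support and bound its total size using the exact-volume promise. With degrees normalized to one, greedily move into a core $U$ every nonempty $X$ of the remaining set $R\subseteq S$ with $w_G(X,R\setminus X)+w_G(X,V\setminus S)<(1-\theta)|X|$, and let $W$ be what remains.

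Summing the removal inequalities gives $\Phi_G(U)<1-\theta$. With $\eta=\theta^2$ and $\theta<1/(2C')$, the promise then forces $|U|<\theta k$: random $k$-subsets of $U$ rule out $|U|>k$, and padding $U$ to size $k$ gives $|U|(1-\Phi_G(U))\le\theta^2k$.

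The residual $W$ has modified Dirichlet expansion at least $1-\theta$. Hence the Laplacian of $G[W]$ plus the diagonal boundary weights to $V\setminus S$ has minimum eigenvalue at least $(1-\theta)^2/2$. Deleting the $U$--$W$ edges only lowers the quadratic form, so in your normalization the NO savings are at most $\mu\beta^2\bigl(|U|+|W|\,\mu/(\mu+\lambda d(1-\theta)^2/2)\bigr)$. The core term $\mu\beta^2\theta k$ becomes negligible once $\theta$ is chosen after $\lambda d/\mu$. With this lemma, your parameter bookkeeping goes through.

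\textbf{A smaller issue: rationality.} The rows $\sqrt{\lambda w_{uv}}(e_u-e_v)$ and $\sqrt{\mu}\,I$ are generally irrational, while the theorem requires $A\in\mathbb{Q}^{m\times n}$. Writing each coefficient as a sum of rational squares and splitting rows accordingly, with matching responses, fixes this. It preserves $A^\top A$, the objective, and the two-nonzeros-per-row property.
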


The feasibility premise in Conjecture~27~\cite{axiotis2021sparse} is automatic
for the instances in our reduction.  Because $A$ has full column rank, $Q$ is
coercive and attains its
minimum over the finite union of coordinate subspaces
$\{z:\norm{z}_0\leq k\}$.  Let $z^\star$ be such a minimizer and write
$r^\star=\norm{z^\star}_0$.  Then
\[
    r^\star\leq k\leq
    Ck\kappa_{r^\star+k}^{1-\gamma}.
\]
The second inequality follows from $C\geq1$, $\kappa_{r^\star+k}\geq1$, and
$1-\gamma\geq0$.  Thus $z^\star$ itself satisfies both guarantees.


\subsection{Proof overview}
\label{sec:overview}

Let $G$ be a rational weighted regular graph with $n$ vertices.  Normalize its
weighted degree to one, and let $L_G$ be its Laplacian.  We choose a constant
$c>0$ later as a function only of the fixed pair $(C,\gamma)$ and set $\mu=c$.
Consider
\begin{equation}
    F_G(x)=x^\top L_Gx+\mu\norm{x}_2^2-2\one^\top x.
    \label{eq:overview-quadratic}
\end{equation}
Here $\one\in\R^n$ is the all-ones vector.  To expose the least-squares
structure, first work over $\R$.  For each non-loop edge $\{u,v\}$ of weight
$w_{uv}$, add the edge row $\sqrt{w_{uv}}(e_u-e_v)^\top$ with response zero.
For each vertex $v$, add the ridge row and response
\[
    \sqrt{\mu}\,e_v^\top,
    \qquad \frac1{\sqrt{\mu}},
\]
respectively.  Thus every edge row has two nonzero entries and every ridge row
has one.  Expanding the squared residuals gives
\begin{equation}
\begin{aligned}
    \norm{Ax-b}_2^2
    &=\sum_{u<v}w_{uv}(x_u-x_v)^2
      +\sum_{v\in V}\left(\sqrt{\mu}\,x_v-\frac1{\sqrt{\mu}}\right)^2 \\
    &=x^\top L_Gx+\mu\norm{x}_2^2-2\one^\top x+\frac n\mu
      =F_G(x)+\frac n\mu.
\end{aligned}
\label{eq:overview-least-squares}
\end{equation}
If $B_G$ denotes the block of edge rows, then
$B_G^\top B_G=L_G$, while the ridge rows contribute $\mu I$.  Therefore
\[
    A^\top A=L_G+\mu I.
\]
Since $L_G$ is positive semidefinite and $\mu>0$, this matrix is positive
definite, so $A$ has full column rank.  Moreover, the spectrum of the
Laplacian of a weighted graph of degree one lies in
$[0,2]$~\cite{chung1997spectral}.  Thus, for every $r$ and every $r$-sparse
unit vector $h$,
\[
    \mu
    \leq \norm{Ah}_2^2
    =h^\top(L_G+\mu I)h
    \leq 2+\mu.
\]
By the definition of $\kappa_r$, these bounds imply
\[
    \kappa_r\leq\frac{2+\mu}{\mu}=1+\frac2c
\]
for every sparsity level $r$.  The square roots above need not be rational.
In \cref{lem:rational-realization}, each edge or ridge coefficient is instead
realized as a sum of squares of rationals.  The resulting rows have the same
supports, so this replacement preserves $\norm{Ax-b}_2^2$, $A^\top A$, and the
property that every row has at most two nonzero entries.

We next establish a gap between the two cases of the SSE instance.  Write
$\Phi_G(S)$ for the normalized edge expansion of $S$.  The parameter $\eta$
specifies the SSE promise: in the YES case, $G$ contains a set $S^\star$ of
size $k$ with $\Phi_G(S^\star)\leq\eta$, whereas in the NO case every set of
size $k$ has expansion at least $1-\eta$.  In the YES direction
(completeness), consider the $k$-sparse vector $\alpha\one_{S^\star}$, where
$\one_{S^\star}$ is the indicator of $S^\star$.  Since
$\one_{S^\star}^\top L_G\one_{S^\star}
=k\Phi_G(S^\star)\leq k\eta$ and $\mu=c$,
\[
    F_G(\alpha\one_{S^\star})
    \leq k\bigl((\eta+c)\alpha^2-2\alpha\bigr).
\]
The right-hand side is minimized at $\alpha=1/(\eta+c)$, giving
\[
    \min_{\norm{x}_0\leq k}F_G(x)
    \leq -\frac{k}{\eta+c}.
\]

In the NO direction (soundness), assume that $G$ is a NO instance and let
$S=\supp(x)$ be the support returned on a run in which the hypothetical
algorithm satisfies its guarantees.  The support guarantee and the uniform
condition-number bound imply
\[
    |S|
    \leq Ck\kappa_{|S|+k}^{1-\gamma}
    \leq Ck(1+2/c)^{1-\gamma}
    \leq Lk
\]
for a suitable constant $L$.  We prove a lower bound on
$\min_{\supp(y)\subseteq S}F_G(y)$ for every such $S$.  The difficulty is that
$|S|$ need not equal $k$ and may be larger, whereas the exact-size SSE NO
promise applies only to sets of size exactly $k$.  It therefore cannot be
applied directly to $S$.

Fix a nonempty $S$ and take $\eta=\theta^2$, with $\theta>0$ chosen
sufficiently small later.  We decompose $S$ into a nonexpanding core $U$ and
an expanding residual $W$.
Initialize $U=\varnothing$ and $R=S$.  Whenever some nonempty $X\subseteq R$
satisfies
\[
    w_G(X,R\setminus X)+w_G(X,V\setminus S)
    <(1-\theta)|X|,
\]
move $X$ from $R$ to $U$.  At termination, set $W=R$.  Summing the removal
inequalities shows that $\Phi_G(U)<1-\theta$ when $U\neq\varnothing$, while
the absence of another removable set means that $W$ has modified Dirichlet
expansion at least $1-\theta$ when $W\neq\varnothing$.

The exact-size NO promise nevertheless forces $U$ to be small.  On the
sufficiently large instances used in the reduction, $k\geq2$.  Suppose first
that $|U|>k$.  For a uniformly random $k$-element subset $T$ of $U$, the low
expansion of $U$ and the bound $|U|\leq Lk$ give
\[
    \E\Phi_G(T)<1-\frac{\theta}{2L}
                <1-\theta^2=1-\eta,
\]
where the second inequality follows by choosing $\theta<1/(2L)$.  This
contradicts the NO promise, so $|U|\leq k$.  When $U\neq\varnothing$, pad $U$
to a set $T$ of size exactly $k$.  The added vertices contribute at most one
unit of boundary weight each, and hence
\[
    1-\eta
    \leq\Phi_G(T)
    \leq\frac{|U|\Phi_G(U)+k-|U|}{k}.
\]
Rearranging gives
\[
    |U|\bigl(1-\Phi_G(U)\bigr)\leq \eta k.
\]
Since $\eta=\theta^2$ and $1-\Phi_G(U)>\theta$, the preceding inequality gives
$|U|<\theta k$.  When $W\neq\varnothing$, a Dirichlet Cheeger
inequality~\cite{chung2007random} turns its modified expansion into the
spectral bound
\[
    \lambda_{\min}(L'_W)\geq \frac12(1-\theta)^2,
\]
where $L'_W$ retains the edges inside $W$ and the boundary contributions from
$W$ to $V\setminus S$, but not the edges between $U$ and $W$.

Deleting these edges can only decrease the Laplacian quadratic form and
decouples the coordinates in $U$ from those in $W$.  For a positive-definite
matrix $M$,
\[
    \min_y\bigl(y^\top My-2\one^\top y\bigr)
    =-\one^\top M^{-1}\one.
\]
Applying this identity to the two blocks, using the fact that the core block is
bounded below by $\mu I$ and that the residual block is bounded below by
$(\lambda_{\min}(L'_W)+\mu)I$, gives the following bound, with the $W$-term
omitted when $W=\varnothing$:
\[
\begin{aligned}
    \min_{\supp(y)\subseteq S}F_G(y)
    &\geq
    -\frac{|U|}{\mu}
    -\frac{|W|}{\lambda_{\min}(L'_W)+\mu}\\
    &\geq
    -\frac{\theta k}{c}
    -\frac{Lk}{\tfrac12(1-\theta)^2+c}.
\end{aligned}
\]
The second line uses $|U|<\theta k$, $|W|\leq Lk$, and $\mu=c$.

Finally, \cref{lem:parameter-selection} chooses $c,L,\theta$ so that
\[
    C(1+2/c)^{1-\gamma}\leq L,
    \qquad \theta<\frac1{2L},
    \qquad \Gamma(c,L,\theta)>0,
\]
where $\Gamma$ is the difference between the completeness and soundness
coefficients above.  The resulting least-squares gap is
$\Delta_G=k\Gamma(c,L,\theta)$.  Calling the hypothetical algorithm with
$\eps_G=\Delta_G/3$ would distinguish the YES and NO cases with probability at
least $2/3$, contradicting \cref{hyp:exact-sse}.

\subsection{Related work}

For sparse convex optimization, early algorithms either retained a dependence
on the requested accuracy or gave accuracy-independent support bounds of order
$k\kappa^2$~\cite{natarajan1995sparse,shalev2010trading,jain2014iterative}.
Axiotis and Sviridenko introduced Adaptively Regularized Hard Thresholding and
obtained an $O(k\kappa)$ support bound with high probability, removing both the
accuracy dependence and the extra factor of
$\kappa$~\cite{axiotis2021sparse}.  Their later deterministic variant retains
the $O(k\kappa)$ support guarantee, improves the runtime, and does not require
prior knowledge of the optimal value or sparsity~\cite{axiotis2022iterative}.

Classical worst-case hardness results for sparse regression are not
parameterized by the restricted condition number.  Natarajan proved that it is
NP-hard to find the sparsest vector attaining a prescribed residual
tolerance~\cite{natarajan1995sparse}.  Foster, Karloff, and Thaler introduced a
bicriteria $(g,h)$-sparse-regression formulation that simultaneously relaxes
support and prediction error~\cite{foster2015variable}.  For an $m\times p$
design, under the assumption
$\mathrm{NP}\nsubseteq
\mathrm{BPTIME}(N^{\operatorname{polylog} N})$, where $N$ denotes the input
length, their main result shows that for any fixed positive constants
$\delta,C_1,C_2$ there are functions
$g(p)\in 2^{\Omega(\log^{1-\delta}p)}$ and
$h(m,p)\in\Omega(p^{C_1}m^{1-C_2})$ for which no quasipolynomial-time
randomized algorithm exists.  Under the Biregular Projection Games Conjecture
together with $\mathrm{NP}\nsubseteq\mathrm{BPP}$, they obtain analogous
hardness against randomized polynomial-time algorithms with
$g(p)\in p^{\Omega(1)}$, and they establish corresponding results for a noisy
statistical formulation.  These theorems give strong general bicriteria
inapproximability, but do not calibrate the hardness to $\kappa$.

Separate from those complexity-theoretic results, Appendix~B of the full
version of Foster, Karloff, and Thaler gives a concrete bad instance for
forward stepwise selection~\cite{foster2015variable}.  In Section~5.1,
Axiotis and Sviridenko interpret this construction as a linear-in-$\kappa$
barrier for OMP, OMPR, IHT, and Partial Hard Thresholding, and note that the
same example applies to ARHT and Exhaustive Local
Search~\cite{axiotis2021sparse}.  Their discussion also lists LASSO; the Foster,
Karloff, and Thaler full version itself states that the example thwarts LASSO
but does not include that proof.  For the OMP-like methods, the mechanism is
explicit: the gradient criterion, when initialized outside the optimal
support, favors distractor coordinates and never enters that support.  This is
an algorithm-specific obstruction, rather than a complexity lower bound for
arbitrary polynomial-time algorithms.

Conjecture~27 of Axiotis and Sviridenko asks whether the exponent of $\kappa$
can be improved by any polynomial-time algorithm.  Our result addresses this
remaining question for least squares: conditional on \cref{hyp:exact-sse}, it
rules out the support bound $O(k\kappa_{s+k}^{1-\gamma})$ for arbitrary
randomized polynomial-time algorithms.
Thus the lower bound is directly parameterized by the same restricted
condition number that appears in the upper bounds.

The Small-Set Expansion Hypothesis arose from work relating graph expansion to
the Unique Games Conjecture~\cite{raghavendra2010graph}.  Reductions among
exact-volume and robust expansion problems were developed by Raghavendra,
Steurer, and Tulsiani~\cite{raghavendra2012reductions}.  The hypothesis used
here concerns randomized hardness of this exact-volume problem on rational
weighted regular inputs.  Our main soundness step converts expansion at one
prescribed cardinality into a bound on the nonexpanding core of an arbitrary
relaxed support.

\section{Preliminaries}
\label{sec:preliminaries}

Throughout, numerical inputs and outputs are rational and encoded in binary.

\subsection{Expansion and randomized exact-volume SSEH}

Let $G$ be an undirected weighted graph on $V=[n]$, represented by symmetric
nonnegative rational weights $w_{uv}=w_{vu}$, with its nonzero weights listed
and encoded in binary.  We allow self-loops.  The weighted degree of every
vertex is normalized to one:
\begin{equation}
    \sum_{v\in V}w_{uv}=1
    \qquad\text{for every }u\in V.
    \label{eq:weighted-regularity}
\end{equation}
This normalization loses no generality: if the common weighted degree is $d$,
dividing every weight by $d$ converts the usual conductance
$w_G(S,V\setminus S)/(d|S|)$ into the degree-one expression below and preserves
polynomial bit complexity.  For sets $S,T\subseteq V$, write
\[
    w_G(S,T)=\sum_{u\in S}\sum_{v\in T}w_{uv},
    \qquad
    I_G(S)=w_G(S,S).
\]
For a nonempty set $S\subseteq V$, its normalized edge expansion is
\begin{equation}
    \Phi_G(S)=\frac{w_G(S,V\setminus S)}{|S|}.
    \label{eq:edge-expansion}
\end{equation}
Regularity gives
\begin{equation}
    \Phi_G(S)=1-\frac{I_G(S)}{|S|}.
    \label{eq:expansion-internal-edges}
\end{equation}
Let $W_G=(w_{uv})_{u,v\in V}$ and $L_G=I-W_G$.  Then
\begin{equation}
    x^\top L_Gx
    =\frac12\sum_{u,v\in V}w_{uv}(x_u-x_v)^2
    =\sum_{u<v}w_{uv}(x_u-x_v)^2.
    \label{eq:laplacian-form}
\end{equation}
In particular, for every nonempty $S$,
\begin{equation}
    \one_S^\top L_G\one_S
    =w_G(S,V\setminus S)
    =|S|\Phi_G(S).
    \label{eq:indicator-laplacian}
\end{equation}

The exact-volume YES/NO structure below is the
$\SSE(\eta,\delta)$ promise of Raghavendra, Steurer, and
Tulsiani~\cite{raghavendra2012reductions}; equivalently, it may be stated using
the uniform vertex measure $\nu(S)=|S|/n$.  We state its randomized
finite-precision version, with rational edge weights encoded in binary.  The
usual NP-hardness formulation together with $\mathrm{NP}\nsubseteq\mathrm{BPP}$
implies this randomized formulation.

\begin{assumption}[Randomized weighted exact-volume SSE hypothesis]
\label{hyp:exact-sse}
For every constant $\eta\in(0,1/2)$, there exists a constant
$\delta_{\mathrm{SSE}}\in\mathbb{Q}\cap(0,\eta]$ such that the following
promise problem cannot be solved by a randomized polynomial-time algorithm with success
probability at least $2/3$ on every promised input.  Given an undirected,
weighted graph $G$ on $n$ vertices satisfying \cref{eq:weighted-regularity},
with rational weights encoded in binary, and restricted to admissible values
of $n$ for which
$k=\delta_{\mathrm{SSE}}n$ is an integer, distinguish between
\begin{align*}
    \textnormal{YES:}\quad
    &\textnormal{there exists }S^\star\subseteq V,
      \ |S^\star|=k,
      \ \Phi_G(S^\star)\leq\eta;\\
    \textnormal{NO:}\quad
    &\textnormal{every }S\subseteq V,
      \ |S|=k,
      \ \Phi_G(S)\geq1-\eta.
\end{align*}
The hard family contains arbitrarily large admissible values of $n$.
\end{assumption}

The restriction $\delta_{\mathrm{SSE}}\leq\eta$ entails no loss: for a
uniformly random $k$-element set $T$, one has
$\E\Phi_G(T)\leq(n-k)/(n-1)$, so arbitrarily large NO instances require
$\delta_{\mathrm{SSE}}\leq\eta$.

The equality $|S|=k$ is part of both sides of the promise: in particular, the
NO case makes no assertion about sets of smaller or larger cardinality.  This
exact-size restriction is used in \cref{lem:core-size}.  The probability in
\cref{hyp:exact-sse} is over the internal randomness of the distinguishing
algorithm; the constant $2/3$ may be replaced by any fixed constant greater
than $1/2$ by standard amplification.

\subsection{Restricted condition numbers}

Let $A\in\R^{m\times n}$ have full column rank.  For an integer $r\geq1$,
define
\begin{equation}
    \kappa_r(A)
    =
    \frac{\displaystyle
        \max_{\substack{h\in\R^n:\ \norm{h}_2=1,\ \norm{h}_0\leq r}}
        \norm{Ah}_2^2}
         {\displaystyle
        \min_{\substack{h\in\R^n:\ \norm{h}_2=1,\ \norm{h}_0\leq r}}
        \norm{Ah}_2^2}.
    \label{eq:restricted-condition-number}
\end{equation}
For $r\geq n$, we use the convention $\kappa_r(A)=\kappa_n(A)$.  The
denominator is positive because $A$ has full column rank.  If $A_T$ denotes
the submatrix formed by the columns indexed by $T$, then the maximum
and minimum of $\norm{Ah}_2^2$ over unit vectors supported on $T$ are
$\lambda_{\max}(A_T^\top A_T)$ and
$\lambda_{\min}(A_T^\top A_T)$, respectively.  Thus
\cref{eq:restricted-condition-number} is equivalent to the usual
principal-submatrix definition.  In the normalization of Axiotis and
Sviridenko, the restricted smoothness and strong-convexity constants are twice
the numerator and denominator of \cref{eq:restricted-condition-number} for the
least-squares objective $\norm{Ax-b}_2^2$, respectively.  The factor of two
cancels, so this is exactly the condition number used in
Conjecture~27~\cite{axiotis2021sparse}.  When the matrix $A$ is understood, we
write $\kappa_r$ for $\kappa_r(A)$.

For later reference we isolate the algorithmic guarantee that the reduction
rules out.

\begin{definition}[Condition-number-improving sparse least squares]
\label{def:algorithm}
Fix constants $\gamma\in(0,1]$ and $C\geq1$.  A $(C,\gamma)$ sparse
least-squares algorithm is a randomized algorithm that, on every input
$A\in\mathbb{Q}^{m\times n}$, $b\in\mathbb{Q}^m$,
$k\in\{1,\ldots,n\}$, and $\eps\in\mathbb{Q}_{>0}$, with $A$ of full column
rank, returns $x\in\mathbb{Q}^n$ for which the two guarantees
\begin{equation}
\begin{aligned}
    \norm{Ax-b}_2^2
    &\leq\min_{\norm{z}_0\leq k}\norm{Az-b}_2^2+\eps,\\
    \norm{x}_0
    &\leq Ck\kappa_{\norm{x}_0+k}(A)^{1-\gamma}.
\end{aligned}
    \label{eq:algorithm-definition}
\end{equation}
hold simultaneously with probability at least $2/3$ over the internal
randomness of the algorithm.
\end{definition}

The index $\norm{x}_0+k$ in \cref{eq:algorithm-definition} depends on the
output itself.  The reduction handles this self-reference by bounding
$\kappa_r(A)$ uniformly over all $r$.

\section{Graph-to-regression construction}
\label{sec:reduction}

Fix a positive rational constant $c$; its final value will be chosen in
\cref{sec:parameters}.

\subsection{Quadratic construction and conditioning}

Given an SSE instance $G$, set
\begin{equation}
    \mu=c
    \label{eq:ridge-strength}
\end{equation}
and define
\begin{equation}
    F_G(x)=x^\top L_Gx+\mu\norm{x}_2^2-2\one^\top x.
    \label{eq:quadratic-F}
\end{equation}

\begin{lemma}[Uniform condition-number bound]
\label{lem:condition-number}
The matrix $L_G+\mu I$ has spectrum contained in
$[\mu,2+\mu]$.  Consequently, every restricted condition number of a
least-squares objective with normal matrix $L_G+\mu I$ is at most
\[
    K(c)=1+\frac{2}{c}.
\]
\end{lemma}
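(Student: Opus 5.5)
The plan is to bound the Rayleigh quotient of $L_G$ directly from the quadratic-form identity \cref{eq:laplacian-form}. For the lower end, \cref{eq:laplacian-form} writes $x^\top L_Gx$ as a nonnegative combination of squares, so $L_G\succeq 0$. For the upper end, use $(x_u-x_v)^2\leq 2x_u^2+2x_v^2$ inside the symmetric sum. This gives
\[
    x^\top L_Gx\leq\frac12\sum_{u,v}w_{uv}\bigl(2x_u^2+2x_v^2\bigr)=2\sum_u x_u^2\sum_v w_{uv}=2\norm{x}_2^2,
\]
where the last equality uses the symmetry $w_{uv}=w_{vu}$ and the degree-one normalization \cref{eq:weighted-regularity}.

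Self-loops need no separate treatment: a term with $u=v$ contributes $(x_u-x_u)^2=0$ to the left side and only a nonnegative amount to the right side, so the inequality still holds. Hence the spectrum of $L_G$ lies in $[0,2]$. Adding $\mu I$ shifts every eigenvalue by $\mu$, so the spectrum of $L_G+\mu I$ lies in $[\mu,2+\mu]$.

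For the condition-number consequence, let $A$ be any matrix with $A^\top A=L_G+\mu I$. Then $A$ has full column rank, since $\mu=c>0$ makes $A^\top A$ positive definite. For every unit vector $h$, and in particular every $r$-sparse one, we have $\norm{Ah}_2^2=h^\top(L_G+\mu I)h\in[\mu,2+\mu]$. Therefore the maximum in \cref{eq:restricted-condition-number} is at most $2+\mu$ and the minimum is at least $\mu$. Their ratio is at most $(2+\mu)/\mu=1+2/c=K(c)$ for every $r$. The convention $\kappa_r=\kappa_n$ for $r\geq n$ is covered automatically.

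There is no real obstacle. The only point needing care is the upper eigenvalue bound, and in particular keeping the factor $\tfrac12$ and the symmetric double sum consistent with the degree normalization, including when loops are present.
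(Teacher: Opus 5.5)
Your proof is correct and follows essentially the paper's argument. You bound the Laplacian form by $2\norm{x}_2^2$ using $(a-b)^2\leq 2(a^2+b^2)$ and the degree-one normalization, shift by $\mu I$, and take the ratio $(2+\mu)/\mu$. The only cosmetic differences are that you work with the symmetric double sum rather than $\sum_{u<v}$ when handling loops, and that you bound the Rayleigh quotient over sparse unit vectors directly instead of invoking Cauchy interlacing; both amount to the same thing.
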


\begin{proof}
For every $x\in\R^V$,
\[
    0\leq x^\top L_Gx
    =\sum_{u<v}w_{uv}(x_u-x_v)^2
    \leq2\sum_{u<v}w_{uv}(x_u^2+x_v^2)
    \leq2\norm{x}_2^2,
\]
where the last inequality uses \cref{eq:weighted-regularity}; self-loop
weights only reduce the total non-loop weight incident to a vertex.  Hence
$L_G$ has spectrum in $[0,2]$, as is standard for normalized
Laplacians~\cite{chung1997spectral}.  Adding $\mu I$ puts the spectrum in
$[\mu,2+\mu]$.
Every principal submatrix has eigenvalues between the global
minimum and maximum by Cauchy interlacing.  Since $\mu=c$, the ratio is
at most $(2+\mu)/\mu=1+2/c$.
\end{proof}

\subsection{Exact rational realization}

Let $B_G$ be the weighted incidence matrix having one row
$\sqrt{w_{uv}}(e_u-e_v)^\top$ for each non-loop edge $\{u,v\}$.  Its diagonal
and off-diagonal entries satisfy
\[
    (B_G^\top B_G)_{uu}
    =\sum_{v\neq u}w_{uv}=1-w_{uu},
    \qquad
    (B_G^\top B_G)_{uv}=-w_{uv}\quad(u\neq v).
\]
Hence $B_G^\top B_G=I-W_G=L_G$.  The most direct real least-squares
representation is
\[
    \widetilde A_G=
    \begin{pmatrix}
        B_G\\
        \sqrt{\mu}\,I
    \end{pmatrix},
    \qquad
    \widetilde b_G=
    \begin{pmatrix}
        0\\
        \mu^{-1/2}\one
    \end{pmatrix}.
\]
Indeed,
\[
    \norm{\widetilde A_Gx-\widetilde b_G}_2^2
    =x^\top L_Gx
      +\mu\norm{x}_2^2-2\one^\top x+\frac n\mu
    =F_G(x)+\frac n\mu,
\]
and $\widetilde A_G^\top\widetilde A_G=L_G+\mu I$.  This factorization may
contain irrational square roots of edge weights or of $\mu$.  The next lemma
replaces both blocks by rational rows without changing either identity.

\begin{lemma}[Exact rational realization]
\label{lem:rational-realization}
There is a polynomial-time construction of a rational matrix $A_G$ and a
rational vector $b_G$ such that, for every $x\in\R^V$,
\begin{equation}
    Q_G(x):=\norm{A_Gx-b_G}_2^2
    =F_G(x)+\frac{n}{\mu},
    \label{eq:least-squares-realization}
\end{equation}
and
\begin{equation}
    A_G^\top A_G=L_G+\mu I.
    \label{eq:normal-matrix}
\end{equation}
The construction has polynomial size, every row of $A_G$ has at most two
nonzero entries, and all entries have polynomial bit complexity.
\end{lemma}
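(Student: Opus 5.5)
The plan is to use Lagrange's four-square theorem on each weight, made constructive. A positive rational $q=p/r$ can always be written as a sum of four squares of rationals: write $q=pr/r^2$ and express $pr=a_1^2+a_2^2+a_3^2+a_4^2$. Finding such a representation in polynomial time is a randomized (Rabin--Shallit) procedure, not obviously deterministic. So I would avoid it.

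Instead I would replace each coefficient by a sum of squares of rationals with a bounded number of terms, using a simple deterministic identity. Write $q=p/r$ with $p,r$ positive integers. Then
\[
q=\frac{pr}{r^2}=\sum_{j=1}^{pr}\Bigl(\frac1r\Bigr)^2.
\]
This has exponentially many terms, so that identity is not usable. A better route uses binary expansion of the integer $N=pr$, with $N=\sum_{i\in I}2^{i}$, where $|I|$ is polynomial. Each power $2^{i}$ is a square if $i$ is even, and otherwise $2^{i}=(2^{(i-1)/2})^2+(2^{(i-1)/2})^2$. Hence
\[
q=\sum_{t=1}^{m_q}\Bigl(\frac{a_t}{r}\Bigr)^2
\qquad\text{with } m_q\le 2\lceil\log_2(N+1)\rceil,
\]
and every $a_t$ is a power of two. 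This gives a deterministic, polynomial-size decomposition $q=\sum_t \beta_t^2$ with rational $\beta_t$ of polynomial bit length.

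With this decomposition in hand, the construction is as follows.
\begin{itemize}
\item For each non-loop edge $\{u,v\}$ with $w_{uv}>0$, decompose $w_{uv}=\sum_t\beta_t^2$ and add the rows $\beta_t(e_u-e_v)^\top$ with response $0$.
\item For the ridge block, decompose $\mu=c=\sum_t\gamma_t^2$. For each vertex $v$ and each $t$, add the row $\gamma_t e_v^\top$ with response $\gamma_t/\mu$.
\end{itemize}
These responses are rational, since $\mu$ is a fixed positive rational.

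Verifying the two identities is then a direct calculation. The edge rows of one edge contribute
\[
\sum_t\beta_t^2(x_u-x_v)^2=w_{uv}(x_u-x_v)^2,
\qquad
\sum_t\beta_t^2(e_u-e_v)(e_u-e_v)^\top=w_{uv}(e_u-e_v)(e_u-e_v)^\top,
\]
so the edge block has the same Gram matrix $L_G$ as $B_G$. The ridge rows at $v$ contribute
\[
\sum_t(\gamma_tx_v-\gamma_t/\mu)^2=\mu x_v^2-2x_v+1/\mu
\]
to the residual and $\mu\, e_ve_v^\top$ to the Gram matrix. Summing over all edges and vertices gives \cref{eq:least-squares-realization} and \cref{eq:normal-matrix}, by the same expansion as for $\widetilde A_G$. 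Self-loops contribute nothing, consistent with \cref{eq:laplacian-form}.

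For the size bounds, every row has at most two nonzero entries. The number of rows is at most the number of edges times $O(\text{bit length of weights})$, plus $n\cdot O(1)$, and all entries have polynomial bit length. The only step needing care is this polynomial-size sum-of-squares decomposition; the binary-expansion trick resolves it deterministically, so I expect no real obstacle.
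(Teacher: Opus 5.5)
Your proposal is correct and follows the paper's proof essentially step for step. It uses the same deterministic binary-expansion decomposition $q=\sum_t(a_t/r)^2$ of $N=pr$ (even bits as one square, odd bits as two equal squares), the same edge rows $\beta_t(e_u-e_v)^\top$ with response $0$, the same ridge rows $\gamma_t e_v^\top$ with response $\gamma_t/\mu$, and the same verification of both identities and of the size bounds.
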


\begin{proof}
We first record a deterministic rational sum-of-squares decomposition.  Given
a positive rational number $a/b$ in lowest terms, put $N=ab$.  Read the binary
expansion of $N$.  A nonzero bit $2^{2j}$ contributes the square $(2^j)^2$,
whereas a nonzero bit $2^{2j+1}$ contributes two copies of $(2^j)^2$.  Thus, in
polynomial time, we obtain
\[
    N=\sum_{t=1}^{\ell}r_t^2,
    \qquad \ell=O(1+\log N).
\]
Setting $\lambda_t=r_t/b$ gives
\begin{equation}
    \sum_{t=1}^{\ell}\lambda_t^2
    =\frac{N}{b^2}=\frac ab.
    \label{eq:rational-square-decomposition}
\end{equation}

Apply this decomposition to every positive non-loop edge weight, writing
\[
    w_{uv}=\sum_{t=1}^{\ell_{uv}}\lambda_{uv,t}^2.
\]
For each coefficient, add the row
$\lambda_{uv,t}(e_u-e_v)^\top$ to $A_G$ and response zero to $b_G$.  Together
these rows contribute exactly $w_{uv}(x_u-x_v)^2$.  Self-loops require no
rows, because they make no contribution to $L_G$.

Apply the same decomposition to $\mu$, writing
\[
    \mu=\sum_{t=1}^{\ell_\mu}\lambda_{\mu,t}^2.
\]
For every vertex $i$ and every $t\in\{1,\ldots,\ell_\mu\}$, add the row
$\lambda_{\mu,t}e_i^\top$ to $A_G$ and the response
$\lambda_{\mu,t}/\mu$ to $b_G$.  For a fixed coordinate $i$, these rows
contribute
\[
    \sum_{t=1}^{\ell_\mu}
    \left(\lambda_{\mu,t}x_i-\frac{\lambda_{\mu,t}}{\mu}\right)^2
    =\mu x_i^2-2x_i+\frac1\mu.
\]
Summing over vertices and edges proves
\cref{eq:least-squares-realization,eq:normal-matrix}.

Every edge row has two nonzero entries, while every ridge row has one.  This
proves the asserted row-sparsity property.

For a weight $a/b$, the construction uses $O(1+\log a+\log b)$ rows, and every
coefficient $r_t/b$ has $O(1+\log a+\log b)$ bits.  Summed over the explicitly
encoded nonzero weights, the number of rows and their total bit complexity are
polynomial in the graph encoding.  The ridge coefficient $\mu=c$ is fixed, so
its block has constant overhead per vertex.
\end{proof}

By \cref{lem:condition-number,lem:rational-realization}, $A_G^\top A_G$ is
positive definite and every restricted condition number of $A_G$ is at most
$K(c)$.  The additive shift $n/\mu$ affects neither minimizers nor value gaps.

\section{Completeness and soundness}
\label{sec:gap-analysis}

Fix rational constants $L\geq1$ and $\theta\in(0,1)$ satisfying
\begin{equation}
    \theta<\frac{1}{2L},
    \label{eq:theta-small}
\end{equation}
and set $\eta=\theta^2$.  Let $G$ be an instance of the resulting exact-volume
SSE promise, write $k=\delta_{\mathrm{SSE}}n$, and restrict attention to
admissible instances with $k\geq2$.  Define
\begin{align}
    Y_G
    &:=\frac{n}{\mu}-\frac{k}{\theta^2+c},
    \label{eq:gap-threshold}\\
    \Gamma(c,L,\theta)
    &:=\frac{1}{\theta^2+c}
      -\frac{\theta}{c}
      -\frac{L}{\tfrac12(1-\theta)^2+c}.
    \label{eq:gamma-gap}
\end{align}

The next theorem records the two bounds for arbitrary parameters satisfying
\cref{eq:theta-small}; \cref{sec:parameters} will choose them so that
$\Gamma(c,L,\theta)>0$.

\begin{theorem}[Reduction gap]
\label{thm:reduction-gap}
For the rational least-squares instance $Q_G$ constructed in
\cref{sec:reduction}, the following bounds hold:
\begin{enumerate}
    \item If $G$ is a YES instance, then
    \[
        \min_{\norm{x}_0\leq k}Q_G(x)\leq Y_G.
    \]
    \item If $G$ is a NO instance, then every $x$ with
    $\norm{x}_0\leq Lk$ satisfies
    \[
        Q_G(x)\geq
        Y_G+k\Gamma(c,L,\theta).
    \]
\end{enumerate}
\end{theorem}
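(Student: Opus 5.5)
Both parts reduce to bounds on $F_G$ via \cref{eq:least-squares-realization}, since $Q_G=F_G+n/\mu$ and $Y_G-n/\mu=-k/(\theta^2+c)$. For completeness, take the set $S^\star$ from the YES promise with $\eta=\theta^2$. Using \cref{eq:indicator-laplacian}, evaluate $F_G(\alpha\one_{S^\star})=\alpha^2k\Phi_G(S^\star)+\mu k\alpha^2-2k\alpha\le k((\theta^2+c)\alpha^2-2\alpha)$. Choosing $\alpha=1/(\theta^2+c)$ gives $-k/(\theta^2+c)$, and this vector is $k$-sparse.

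For soundness, fix $x$ with $S=\supp(x)$, $|S|\le Lk$, and assume $S\neq\varnothing$ (the case $x=0$ gives $F_G=0$, which is even better). It suffices to lower-bound $\min_{\supp(y)\subseteq S}F_G(y)$. Run the peeling procedure from the overview: starting from $U=\varnothing$, $R=S$, repeatedly move a nonempty $X\subseteq R$ with $w_G(X,R\setminus X)+w_G(X,V\setminus S)<(1-\theta)|X|$ into $U$, and set $W=R$ at termination.

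First, summing the removal inequalities bounds the boundary of $U$. Each edge leaving $U$ either goes to $V\setminus S$ or to the part of $R$ that remained when its endpoint was removed, so it is charged. This gives $w_G(U,V\setminus U)<(1-\theta)|U|$.

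Second, show $|U|\le k$. If not, pick a uniformly random $k$-subset $T\subseteq U$. A pair inside $U$ survives with probability $k(k-1)/(|U|(|U|-1))\ge (k/|U|)\cdot\frac{k-1}{|U|-1}$. Hence
\[
\E I_G(T)\ge \tfrac{k(k-1)}{|U|(|U|-1)}I_G(U)
\]
(self-loops survive with probability $k/|U|$, which only helps). With $I_G(U)>\theta|U|$ and $|U|\le Lk$, this gives roughly $\E I_G(T)/k\gtrsim\theta(k-1)/(|U|-1)\ge\theta/(2L)$ for $k\ge2$. Then $\E\Phi_G(T)<1-\theta/(2L)<1-\theta^2$, which contradicts NO. I expect the constant bookkeeping here, with $(k-1)/(|U|-1)\ge 1/(2L)$ and the self-loops, to be the fiddly step.

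Third, pad $U$ to a set $T$ of size exactly $k$ to get $|U|(1-\Phi_G(U))\le\theta^2k$. Since $1-\Phi_G(U)>\theta$, this yields $|U|<\theta k$.

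Fourth, handle $W$. Termination means every nonempty $X\subseteq W$ has $w_G(X,W\setminus X)+w_G(X,V\setminus S)\ge(1-\theta)|X|$. Define $L'_W$ as the quadratic form $\sum_{u<v\in W}w_{uv}(y_u-y_v)^2+\sum_{u\in W}w_G(u,V\setminus S)y_u^2$, a Dirichlet Laplacian with boundary. Apply the Dirichlet Cheeger inequality \cite{chung2007random} to get $\lambda_{\min}(L'_W)\ge\frac12(1-\theta)^2$; this is the step most dependent on an external result. I would check its normalization for degree-at-most-one weighted graphs; a sweep-cut argument on $y^2$ proves it directly if needed.

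Finally, decouple. For $y$ supported on $S$, the form $y^\top L_Gy$ dominates the sum of the terms on edges inside $U$, inside $W$, and from $W$ to $V\setminus S$, plus the nonnegative terms from $U$ to $V\setminus S$. Dropping the $U$–$W$ edges and the $U$-internal edges gives $y^\top L_Gy\ge y_W^\top L'_Wy_W$. So
\[
F_G(y)\ge(\mu\|y_U\|^2-2\one^\top y_U)+(y_W^\top(L'_W+\mu I)y_W-2\one^\top y_W),
\]
and by $\min_y(y^\top My-2\one^\top y)=-\one^\top M^{-1}\one$ this is at least $-|U|/c-|W|/(\tfrac12(1-\theta)^2+c)$. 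Using $|U|<\theta k$ and $|W|\le Lk$ gives $F_G(y)\ge -k\theta/c-kL/(\tfrac12(1-\theta)^2+c)$, which rearranges to $Q_G(x)\ge Y_G+k\Gamma(c,L,\theta)$.
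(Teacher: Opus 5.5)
Your proposal is correct and follows the paper's proof essentially step for step: the same completeness vector $\alpha\one_{S^\star}$ with $\alpha=1/(\theta^2+c)$, the same greedy peeling with threshold $1-\theta$, the same random-subset and padding arguments giving $|U|<\theta k$, the same Dirichlet Cheeger bound $\lambda_{\min}(L'_W)\geq\frac12(1-\theta)^2$, and the same decoupled quadratic minimization. The only cosmetic difference is in the decoupling step. You also discard the $U$-internal edges and minimize the separable lower bound directly. The paper instead keeps $L'_U$, uses $L_S\succeq L'_U\oplus L'_W$ with inverse monotonicity, and then bounds $(L'_U+\mu I)^{-1}\preceq\mu^{-1}I$; both routes yield the same $-|U|/\mu$ term.
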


The remainder of this section proves \cref{thm:reduction-gap}.

\subsection{Completeness}
\label{sec:completeness}

\begin{lemma}[YES-instance upper bound]
\label{lem:completeness}
If $G$ is a YES instance, then
\begin{equation}
    \min_{\norm{x}_0\leq k}F_G(x)
    \leq-\frac{k}{\theta^2+c}.
    \label{eq:yes-F-bound}
\end{equation}
Equivalently,
\begin{equation}
    \min_{\norm{x}_0\leq k}Q_G(x)\leq
    Y_G.
    \label{eq:yes-Q-bound}
\end{equation}
\end{lemma}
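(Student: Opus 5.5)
The plan is to exhibit an explicit $k$-sparse test vector supported on the planted set. Since $G$ is a YES instance with $\eta=\theta^2$, there is a set $S^\star\subseteq V$ with $|S^\star|=k$ and $\Phi_G(S^\star)\leq\theta^2$. For a scalar $\alpha>0$ to be chosen, consider $x=\alpha\one_{S^\star}$. It satisfies $\norm{x}_0=k$, so it is feasible for the minimization in \cref{eq:yes-F-bound}. It therefore suffices to bound $F_G(\alpha\one_{S^\star})$ from above and then optimize over $\alpha$.

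The three terms of $F_G$ in \cref{eq:quadratic-F} are evaluated separately.

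\begin{itemize}
\item By \cref{eq:indicator-laplacian}, the Laplacian term is $\alpha^2\one_{S^\star}^\top L_G\one_{S^\star}=\alpha^2 k\Phi_G(S^\star)\leq\alpha^2k\theta^2$.
\item The ridge term is $\mu\alpha^2\norm{\one_{S^\star}}_2^2=c\alpha^2k$, using $\mu=c$ from \cref{eq:ridge-strength}.
\item The linear term is $-2\alpha\one^\top\one_{S^\star}=-2\alpha k$.
\end{itemize}

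Summing, for every $\alpha\geq0$,
\[
    F_G(\alpha\one_{S^\star})\leq k\bigl((\theta^2+c)\alpha^2-2\alpha\bigr).
\]
Choosing $\alpha=1/(\theta^2+c)$, which is positive because $c>0$, gives the value $-k/(\theta^2+c)$. This proves \cref{eq:yes-F-bound}.

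The equivalent form \cref{eq:yes-Q-bound} then follows from \cref{lem:rational-realization}: the identity $Q_G=F_G+n/\mu$ holds pointwise, so the minima over $\{\norm{x}_0\leq k\}$ differ by exactly $n/\mu$. This yields $Y_G$ as defined in \cref{eq:gap-threshold}.

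There is no real obstacle here. The only points needing care are that the inequality $\Phi_G(S^\star)\leq\eta$ enters with a nonnegative coefficient $\alpha^2$, so the direction of the bound is preserved, and that $x$ has support size exactly $k$, so it is feasible.
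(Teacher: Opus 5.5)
Your proof is correct and matches the paper's argument: you test the $k$-sparse vector $\alpha\one_{S^\star}$, bound the Laplacian term by $\alpha^2\theta^2k$, minimize $k\bigl((\theta^2+c)\alpha^2-2\alpha\bigr)$ at $\alpha=1/(\theta^2+c)$, and then shift by $n/\mu$ to obtain the bound on $Q_G$. The two points of care you flag, namely the nonnegative coefficient $\alpha^2$ and the feasibility $\norm{x}_0=k$, are exactly the ones needed.
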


\begin{proof}
Let $S^\star$ be the promised set of size $k$ and expansion at most
$\theta^2$.  For $x=\alpha\one_{S^\star}$,
\[
    x^\top L_Gx
    =\alpha^2w_G(S^\star,V\setminus S^\star)
    \leq\alpha^2\theta^2k.
\]
Consequently,
\[
    F_G(\alpha\one_{S^\star})
    \leq k\bigl((\theta^2+c)\alpha^2-2\alpha\bigr).
\]
The right-hand side is minimized at
$\alpha=1/(\theta^2+c)$, where it equals the right-hand side of
\cref{eq:yes-F-bound}.  Adding $n/\mu$ proves
\cref{eq:yes-Q-bound}.
\end{proof}

\subsection{Support decomposition and Dirichlet expansion}
\label{sec:decomposition}

Fix a support $S\subseteq V$, and let $L_S$ be the principal submatrix of
$L_G$ indexed by $S$.  Directly from the Laplacian quadratic form, every
$x\in\R^S$ satisfies
\begin{equation}
    x^\top L_Sx
    =\sum_{\substack{u<v\\u,v\in S}}w_{uv}(x_u-x_v)^2
      +\sum_{v\in S}w_G(\{v\},V\setminus S)x_v^2.
    \label{eq:principal-dirichlet-form}
\end{equation}
Now fix a partition $S=U\cup W$ with $U\cap W=\varnothing$.  Each edge between
$U$ and $W$ contributes a nonnegative square to
\cref{eq:principal-dirichlet-form}.
Deleting precisely these squares separates the form into two blocks and can
only decrease it.  Let $L_{G[W]}$ denote the Laplacian of the weighted graph
induced by $W$, put
\[
    c_w=w_G(\{w\},V\setminus S),
\]
and define the modified Dirichlet Laplacian
\begin{equation}
    L'_W:=L_{G[W]}+\operatorname{diag}(c_w)_{w\in W}.
    \label{eq:modified-dirichlet-matrix}
\end{equation}
Thus $L'_W$ retains the edges inside $W$ and the Dirichlet boundary terms from
$W$ to $V\setminus S$, but not the edges between $W$ and $U$.  Equivalently,
\begin{equation}
    x^\top L'_Wx
    =\sum_{\substack{u<v\\u,v\in W}}w_{uv}(x_u-x_v)^2
      +\sum_{w\in W}c_wx_w^2,
    \label{eq:modified-dirichlet-form}
\end{equation}
For $X\subseteq W$, put
\begin{equation}
    E'_W(X)=w_G(X,W\setminus X)+w_G(X,V\setminus S)
    \label{eq:modified-boundary}
\end{equation}
and
\begin{equation}
    h'(W)=\min_{\varnothing\neq X\subseteq W}
        \frac{E'_W(X)}{|X|}.
    \label{eq:modified-expansion}
\end{equation}
Define $L'_U$ symmetrically, retaining the edges inside $U$ and the Dirichlet
boundary terms from $U$ to $V\setminus S$ while deleting the edges to $W$.
By \cref{eq:principal-dirichlet-form}, for every $x\in\R^S$,
\begin{equation}
\begin{aligned}
    x^\top L_Sx
    -x_U^\top L'_Ux_U
    -x_W^\top L'_Wx_W
    &=\sum_{u\in U}\sum_{w\in W}w_{uw}(x_u-x_w)^2
    \geq0.
\end{aligned}
    \label{eq:cross-edge-difference}
\end{equation}
Equivalently,
\begin{equation}
    L_S\succeq L'_U\oplus L'_W.
    \label{eq:deleted-cross-edges}
\end{equation}
Here $\oplus$ denotes the block-diagonal direct sum, with the $U$-coordinates
ordered before the $W$-coordinates.

The following is the standard Dirichlet form of Cheeger's
inequality~\cite{chung2007random}; we include the short proof to record the
normalization used here.

\begin{lemma}[Modified Dirichlet Cheeger inequality]
\label{lem:dirichlet-cheeger}
If $W\neq\varnothing$, then
\begin{equation}
    \lambda_{\min}(L'_W)\geq\frac12h'(W)^2.
    \label{eq:dirichlet-cheeger}
\end{equation}
\end{lemma}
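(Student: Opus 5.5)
We follow the standard co-area proof of Cheeger's inequality, with the boundary terms $c_w$ playing the role of edges to an absorbing vertex. Let $x\in\R^W$ be an eigenvector of $L'_W$ for $\lambda=\lambda_{\min}(L'_W)$, and write $x^\top L'_W x$ via \cref{eq:modified-dirichlet-form}. Since $|x_u-x_v|\le\bigl||x_u|-|x_v|\bigr|$ fails in the wrong direction, we instead note $\bigl(|x_u|-|x_v|\bigr)^2\le(x_u-x_v)^2$. Hence replacing $x$ by $|x|$ does not increase the quadratic form and preserves $\norm{x}_2$. So we may assume $x\ge0$ and $\norm{x}_2=1$, with $x^\top L'_Wx\le\lambda$. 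Because $L'_W$ is a principal submatrix of a Laplacian plus diagonal terms, it is positive semidefinite. It therefore suffices to show $x^\top L'_Wx\ge\tfrac12h'(W)^2$ for every nonnegative unit $x$.

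Set $y_w=x_w^2$ and consider the weighted $\ell_1$ quantity
\[
    Z=\sum_{\substack{u<v\\u,v\in W}}w_{uv}\,|y_u-y_v|+\sum_{w\in W}c_w\,y_w .
\]
The lower bound comes from the co-area formula. For $t>0$ let $X_t=\{w\in W:y_w>t\}$. An edge inside $W$ contributes to $Z$ the length of the $t$-interval on which exactly one endpoint lies in $X_t$. A boundary term $c_wy_w$ contributes the length of the interval $t\in(0,y_w)$, on which $w\in X_t$. Therefore
\[
    Z=\int_0^\infty E'_W(X_t)\,dt\ \ge\ h'(W)\int_0^\infty|X_t|\,dt=h'(W)\sum_{w}y_w=h'(W),
\]
where empty $X_t$ contribute zero on both sides. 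This step uses only the definition \cref{eq:modified-expansion}.

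For the upper bound we use Cauchy--Schwarz. Write $|y_u-y_v|=|x_u-x_v|(x_u+x_v)$ and $c_wy_w=(\sqrt{c_w}x_w)(\sqrt{c_w}x_w)$. This gives
\[
    Z\le\sqrt{x^\top L'_Wx}\cdot\sqrt{\sum_{u<v\in W}w_{uv}(x_u+x_v)^2+\sum_w c_wx_w^2}.
\]
The second factor is bounded with $(x_u+x_v)^2\le2(x_u^2+x_v^2)$:
\[
    \sum_{u<v\in W}w_{uv}(x_u+x_v)^2+\sum_w c_wx_w^2\le\sum_w x_w^2\Bigl(2\sum_{v\in W\setminus\{w\}}w_{wv}+c_w\Bigr)\le2\sum_wx_w^2=2.
\]
The last inequality uses the degree normalization \cref{eq:weighted-regularity}: the weight from $w$ to other vertices of $W$ plus $c_w$ is at most one. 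Combining the two bounds gives $h'(W)\le Z\le\sqrt{2\lambda}$, that is, $\lambda\ge\tfrac12h'(W)^2$.

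The main point needing care is the factor-of-two bookkeeping in the second factor. Edge terms pick up the factor $2$ from $(x_u+x_v)^2$, but boundary terms do not, and self-loops must be excluded from $W$-internal edges. Both only help, since the degree is at most one. The co-area step also needs care at boundary vertices, so that the Dirichlet term $c_wy_w$ is exactly accounted for by $w_G(X_t,V\setminus S)$.
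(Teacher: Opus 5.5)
Your proof is correct and follows the paper's own argument. You pass to $|x|$, bound the weighted $\ell_1$ quantity $\sum w_{uv}|x_u^2-x_v^2|+\sum c_wx_w^2$ from above by the same Cauchy--Schwarz step using $2\deg_W(w)+c_w\leq 2$, and bound it from below by the coarea formula. Your level sets of $y_w=x_w^2$ under Lebesgue measure are exactly the paper's level sets of $x$ under the measure $2t\,dt$.
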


\begin{proof}
Let $x\in\R^W$ be nonzero.  Replacing $x$ by $|x|$ does not increase the
quadratic form in \cref{eq:modified-dirichlet-form}, so assume $x\geq0$.
Apply Cauchy--Schwarz to two vectors indexed jointly by the internal edges and
the vertices of $W$: their edge coordinates are
$\sqrt{w_{uv}}|x_u-x_v|$ and $\sqrt{w_{uv}}(x_u+x_v)$, while both boundary
coordinates are $\sqrt{c_w}x_w$.  This gives
\begin{align*}
    &\sum_{\substack{u<v\\u,v\in W}}w_{uv}|x_u-x_v|(x_u+x_v)
      +\sum_{w\in W}c_wx_w^2\\
    &\quad\leq
      (x^\top L'_Wx)^{1/2}
      \left(
        \sum_{\substack{u<v\\u,v\in W}}w_{uv}(x_u+x_v)^2
        +\sum_{w\in W}c_wx_w^2
      \right)^{1/2}\\
    &\quad\leq (x^\top L'_Wx)^{1/2}\sqrt{2}\norm{x}_2.
\end{align*}
The last step uses $(a+b)^2\leq2(a^2+b^2)$ and
$2\deg_W(w)+c_w\leq2$, where $\deg_W(w)$ denotes the total weight from $w$
to the other vertices of $W$; this is where the degree-one normalization
\cref{eq:weighted-regularity} is used.

For $t\geq0$, let $W_t=\{w\in W:x_w>t\}$.  Since $x\geq0$,
\[
    |x_u^2-x_v^2|
    =\int_0^\infty
      \left|\mathbf 1_{\{x_u>t\}}-\mathbf 1_{\{x_v>t\}}\right|2t\,dt,
    \qquad
    x_w^2=\int_0^\infty\mathbf 1_{\{x_w>t\}}2t\,dt.
\]
Summing these identities gives the coarea formula
\begin{align*}
    &\sum_{\substack{u<v\\u,v\in W}}w_{uv}|x_u^2-x_v^2|
      +\sum_{w\in W}c_wx_w^2\\
    &\quad=\int_0^\infty E'_W(W_t)\,2t\,dt
      \geq h'(W)\int_0^\infty |W_t|\,2t\,dt
      =h'(W)\norm{x}_2^2.
\end{align*}
The left-hand side is the same expression bounded above by
Cauchy--Schwarz.  Combining the two estimates and squaring gives
\[
    \frac{x^\top L'_Wx}{\norm{x}_2^2}
    \geq\frac12h'(W)^2.
\]
Taking the infimum over nonzero $x$ proves the claim.
\end{proof}

The next lemma is the usual low-expansion peeling argument; closely related
decompositions appear, for example, in the proof of the Small-Set Expansion
algorithms of Arora, Barak, and Steurer~\cite{arora2010subexponential}.

\begin{lemma}[Greedy support decomposition]
\label{lem:greedy-decomposition}
Let $S\subseteq V$ be nonempty and let $\tau\in(0,1)$.  There is a partition
$S=U\cup W$, with $U\cap W=\varnothing$, such that
\begin{enumerate}
    \item if $W\neq\varnothing$, then $h'(W)\geq\tau$;
    \item if $U\neq\varnothing$, then $\Phi_G(U)<\tau$.
\end{enumerate}
\end{lemma}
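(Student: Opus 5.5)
The plan is to run the greedy peeling procedure from the overview and verify the two properties separately. Initialize $U_0=\varnothing$ and $R_0=S$. At step $j$, if there is a nonempty $X\subseteq R_j$ with
\[
    w_G(X,R_j\setminus X)+w_G(X,V\setminus S)<\tau|X|,
\]
set $U_{j+1}=U_j\cup X$ and $R_{j+1}=R_j\setminus X$. Each step strictly shrinks $R_j$, so the process terminates after at most $|S|$ steps. We then set $U$ to be the final $U_j$ and $W$ the final $R_j$. Existence is all that is claimed, so we do not need to worry about how to find $X$ efficiently.

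Property 1 holds by the termination condition. If $W\neq\varnothing$, no nonempty $X\subseteq W$ is removable, so $E'_W(X)\geq\tau|X|$ for every such $X$. This is exactly the statement $h'(W)\geq\tau$ from \cref{eq:modified-expansion}; note that $E'_W$ there is defined with $R=W$ and the same boundary $V\setminus S$.

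Property 2 is the main point, and it requires bounding the boundary of $U$ by telescoping. Let $X_1,\dots,X_m$ be the removed sets in order, so $U=X_1\cup\dots\cup X_m$ is a disjoint union, and let $R_{j-1}$ be the remainder just before $X_j$ is removed. Every edge leaving $U$ ends either in $V\setminus S$ or in $W$. Consider an edge from $X_j$ to a vertex $v\notin U$. If $v\in V\setminus S$, the edge is counted in $w_G(X_j,V\setminus S)$. If $v\in W$, then $v\in R_{j-1}\setminus X_j$, because $W\subseteq R_{j-1}$, so the edge is counted in $w_G(X_j,R_{j-1}\setminus X_j)$. Therefore
\[
    w_G(U,V\setminus U)\leq\sum_{j=1}^m\bigl(w_G(X_j,R_{j-1}\setminus X_j)+w_G(X_j,V\setminus S)\bigr)<\tau\sum_j|X_j|=\tau|U|.
\]
The strict inequality survives summation because $m\geq1$ whenever $U\neq\varnothing$. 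Dividing by $|U|$ gives $\Phi_G(U)<\tau$. The summed terms also include edges from $X_j$ to later-removed sets $X_{j'}$ with $j'>j$, but these only make the upper bound larger, so they do no harm.

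The only delicate point is this bookkeeping in the last step. One must check that edges from $X_j$ into $W$ are genuinely counted in the removal inequality for $X_j$ (they are, since $W\subseteq R_{j-1}$), and that self-loops are never counted as boundary. Self-loops are harmless, since $w_G(X,R\setminus X)$ and $w_G(U,V\setminus U)$ only involve pairs in disjoint sets.
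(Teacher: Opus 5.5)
Your proposal is correct and follows the paper's proof: the same greedy peeling, with property~1 read off from the termination condition and property~2 obtained by summing the strict removal inequalities over the removed pieces. Your edge-by-edge accounting makes explicit the paper's one-line remark that the summation counts all of $w_G(U,V\setminus U)$ plus nonnegative extra terms. Edges into $W$ are charged to $w_G(X_j,R_{j-1}\setminus X_j)$ because $W\subseteq R_{j-1}$, and edges to later pieces only enlarge the sum.
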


\begin{proof}
Initialize $W=S$ and $U=\varnothing$.  While there is a nonempty set
$X\subseteq W$ with
\begin{equation}
    w_G(X,W\setminus X)+w_G(X,V\setminus S)<\tau|X|,
    \label{eq:greedy-removal-condition}
\end{equation}
move $X$ from $W$ to $U$.  The process terminates because $S$ is finite, and
the first conclusion is exactly the absence of a violating set at
termination.

If $U=\varnothing$, the second conclusion is vacuous.  Otherwise, let
$X_1,\ldots,X_m$ be the removed sets.  Summing
\cref{eq:greedy-removal-condition} over the iterations counts all the boundary
weight $w_G(U,V\setminus U)$ and possibly some additional nonnegative
contributions from edges between different removed pieces.  Hence
\[
    w_G(U,V\setminus U)<\tau\sum_{i=1}^m|X_i|
    =\tau|U|.
\]
Division by $|U|$ proves the second conclusion.
\end{proof}

We use \cref{lem:greedy-decomposition} with
\begin{equation}
    \tau=1-\theta.
    \label{eq:tau-choice}
\end{equation}
Then, whenever $W\neq\varnothing$,
\begin{equation}
    \lambda_{\min}(L'_W)
    \geq\frac12(1-\theta)^2
    \label{eq:residual-spectral-gap}
\end{equation}
by \cref{lem:dirichlet-cheeger}.

\subsection{The exact-volume core lemma}
\label{sec:core-bound}

The next lemma is the step that uses the exact-cardinality NO promise.  It
controls a set whose cardinality is not known in advance using only expansion
information at cardinality exactly $k$.

\begin{lemma}[Size of the nonexpanding core]
\label{lem:core-size}
Assume $G$ is a NO instance with promise parameter $\eta=\theta^2$ and target
cardinality $k=\delta_{\mathrm{SSE}}n\geq2$.  Let $L\geq1$ and
$\theta\in(0,1)$ satisfy $\theta<1/(2L)$, let $S\subseteq V$ satisfy
$|S|\leq Lk$, and apply \cref{lem:greedy-decomposition} with
$\tau=1-\theta$.  Then the removed core satisfies
\begin{equation}
    |U|<\theta k.
    \label{eq:core-size-bound}
\end{equation}
\end{lemma}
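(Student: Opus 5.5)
The plan is to combine the conclusion $\Phi_G(U)<1-\theta$ from \cref{lem:greedy-decomposition} (with $\tau=1-\theta$) with the exact-size NO promise. Since the promise only concerns sets of cardinality exactly $k$, I will manufacture such sets from $U$ in two ways. If $|U|>k$, I sample a random $k$-subset of $U$. If $|U|\leq k$, I pad $U$ up to size $k$. If $U=\varnothing$, the bound $|U|=0<\theta k$ is immediate, so assume $U\neq\varnothing$. Throughout I work with internal weights via \cref{eq:expansion-internal-edges}: $\Phi_G(U)<1-\theta$ is equivalent to $I_G(U)>\theta|U|$. Note also that $|U|\leq|S|\leq Lk$.

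\emph{Step 1: rule out $|U|>k$.} Suppose $|U|>k$ and let $T$ be a uniformly random $k$-element subset of $U$. The quantity $I_G(U)$ consists of ordered pairs $u\neq v$ in $U$ weighted by $w_{uv}$, together with self-loop weights $w_{uu}$. A fixed ordered pair lands in $T$ with probability $\frac{k(k-1)}{|U|(|U|-1)}$, and a fixed loop survives with probability $k/|U|$, which is larger. Hence
\[
    \E I_G(T)\geq\frac{k(k-1)}{|U|(|U|-1)}\,I_G(U)
    >\theta k\,\frac{k-1}{|U|-1}
    \geq\theta k\,\frac{k-1}{Lk-1}.
\]
Because $k\geq2$, we have $2L(k-1)\geq Lk-1$, so $(k-1)/(Lk-1)\geq 1/(2L)$. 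Therefore
\[
    \E\Phi_G(T)=1-\frac{\E I_G(T)}{k}<1-\frac{\theta}{2L}<1-\theta^2,
\]
where the last inequality uses $\theta<1/(2L)$. Some realization of $T$ then has $|T|=k$ and $\Phi_G(T)<1-\eta$, which contradicts the NO promise. So $|U|\leq k$.

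\emph{Step 2: padding.} Now $1\leq|U|\leq k\leq n$. Choose any set $T\supseteq U$ with $|T|=k$. Internal weight is monotone under inclusion, so $I_G(T)\geq I_G(U)$. This gives
\[
    1-\theta^2\leq\Phi_G(T)=1-\frac{I_G(T)}{k}\leq1-\frac{|U|\bigl(1-\Phi_G(U)\bigr)}{k},
\]
that is, $|U|(1-\Phi_G(U))\leq\theta^2k$. Combining this with $1-\Phi_G(U)>\theta$ yields $\theta|U|<\theta^2k$, and hence $|U|<\theta k$.

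\emph{Main obstacle.} The only delicate point is Step 1: the averaging must lose only a factor of about $1/L$ in the internal weight, and that loss must be beaten by $\eta=\theta^2$. Two things make this work. First, the correction factor $(k-1)/(|U|-1)$ is what requires $k\geq2$. Second, self-loops must be handled correctly; they only help, since they survive with higher probability. The rest is bookkeeping with \cref{eq:expansion-internal-edges}.
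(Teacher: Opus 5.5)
Your proposal is correct and matches the paper's proof essentially step for step. A uniformly random $k$-subset of $U$, with the same loop-aware estimate $(k-1)/(Lk-1)>1/(2L)$, rules out $|U|>k$, and padding $U$ to size $k$ yields $|U|(1-\Phi_G(U))\leq\theta^2k$. Your monotonicity-of-$I_G$ phrasing of the padding step is just the internal-weight form of the paper's boundary bound $w_G(T,V\setminus T)\leq w_G(U,V\setminus U)+|R|$.
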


\begin{proof}
There is nothing to prove when $U=\varnothing$.  Otherwise,
\cref{lem:greedy-decomposition} gives
\begin{equation}
    \Phi_G(U)<1-\theta.
    \label{eq:core-low-expansion}
\end{equation}

We first prove that $|U|\leq k$.  Suppose instead that $m:=|U|>k$, and let
$T$ be a uniformly random $k$-element subset of $U$.  Recall that
$I_G(A)=w_G(A,A)=|A|(1-\Phi_G(A))$.  Each ordered pair of
distinct vertices in $U$ belongs to $T\times T$ with probability
$k(k-1)/[m(m-1)]$, while a diagonal pair $(u,u)$ belongs to $T\times T$ with
probability $k/m$, which is larger.  Therefore
\[
    \E I_G(T)
    \geq \frac{k(k-1)}{m(m-1)}I_G(U).
\]
Using \cref{eq:expansion-internal-edges},
\begin{equation}
    \E\Phi_G(T)
    \leq1-\frac{k-1}{m-1}\frac{I_G(U)}{m}.
    \label{eq:random-subset-expansion}
\end{equation}
By \cref{eq:core-low-expansion}, $I_G(U)/m>\theta$.  Moreover,
$m\leq|S|\leq Lk$ and $k\geq2$, so
\[
    \frac{k-1}{m-1}
    \geq\frac{k-1}{Lk-1}
    >\frac{1}{2L}.
\]
Consequently,
\[
    \E\Phi_G(T)<1-\frac{\theta}{2L}.
\]
Every realization of $T$ has cardinality exactly $k$, so the NO promise gives
$\E\Phi_G(T)\geq1-\theta^2$.  Since $\theta<1/(2L)$,
\[
    \E\Phi_G(T)
    <1-\frac{\theta}{2L}
    <1-\theta^2
    \leq\E\Phi_G(T),
\]
a contradiction.  Hence $|U|\leq k$.

Choose any $R\subseteq V\setminus U$ with $|R|=k-|U|$ and put
$T=U\cup R$.  The NO promise gives $\Phi_G(T)\geq1-\theta^2$, while
\[
    w_G(T,V\setminus T)
    \leq w_G(U,V\setminus U)+|R|,
\]
where the degree-one normalization \cref{eq:weighted-regularity} bounds the
total weight incident to every added vertex by one.  It follows that
\[
    1-\theta^2
    \leq\Phi_G(T)
    \leq\frac{|U|\Phi_G(U)+k-|U|}{k}.
\]
After rearranging,
\[
    |U|\bigl(1-\Phi_G(U)\bigr)\leq\theta^2k.
\]
Since \cref{eq:core-low-expansion} implies
$1-\Phi_G(U)>\theta$, we obtain $|U|<\theta k$.
\end{proof}

\subsection{Soundness}
\label{sec:soundness}

\begin{lemma}[Restricted-support minimum]
\label{lem:restricted-support-minimum}
Let $S\subseteq V$ be nonempty, and let $S=U\cup W$ with
$U\cap W=\varnothing$.  If $W\neq\varnothing$, then
\begin{equation}
    \min_{\supp(x)\subseteq S}F_G(x)
    \geq-\frac{|U|}{\mu}
          -\frac{|W|}{\lambda_{\min}(L'_W)+\mu}.
    \label{eq:restricted-support-minimum}
\end{equation}
If $W=\varnothing$, the second term is absent.
\end{lemma}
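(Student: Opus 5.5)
The plan is to reduce the restricted minimization to an unconstrained quadratic minimization over $\R^S$ and then split it into independent blocks. For $x$ with $\supp(x)\subseteq S$, write $x$ also for its restriction to $\R^S$. Since $x$ vanishes off $S$, we have $x^\top L_Gx=x^\top L_Sx$, where $L_S$ is the principal submatrix. Hence
\[
    F_G(x)=x^\top(L_S+\mu I_S)x-2\one_S^\top x .
\]
By \cref{eq:deleted-cross-edges}, $L_S\succeq L'_U\oplus L'_W$. Therefore, writing $x=(x_U,x_W)$,
\[
    F_G(x)\geq\bigl(x_U^\top(L'_U+\mu I)x_U-2\one_U^\top x_U\bigr)
    +\bigl(x_W^\top(L'_W+\mu I)x_W-2\one_W^\top x_W\bigr).
\]
The two brackets involve disjoint variables, so the infimum of the right-hand side is the sum of the two block infima. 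When $W=\varnothing$ only the first bracket appears, and the same for $U=\varnothing$ with the convention that an empty block contributes $0$.

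Next we bound each block via the identity $\min_y(y^\top My-2\one^\top y)=-\one^\top M^{-1}\one$, valid for positive definite $M$. This identity comes from completing the square, with the minimizer $y=M^{-1}\one$.

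For the $U$-block, $L'_U\succeq0$: its quadratic form is a sum of weighted squares together with nonnegative boundary terms, as in \cref{eq:modified-dirichlet-form} with $U$ in place of $W$. Hence $M_U=L'_U+\mu I\succeq\mu I$, so $M_U^{-1}\preceq\mu^{-1}I$ and $-\one_U^\top M_U^{-1}\one_U\geq-|U|/\mu$.

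For the $W$-block, $M_W=L'_W+\mu I\succeq(\lambda_{\min}(L'_W)+\mu)I$, with $\lambda_{\min}(L'_W)\geq0$. This gives $-\one_W^\top M_W^{-1}\one_W\geq-|W|/(\lambda_{\min}(L'_W)+\mu)$.

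The only step needing care is the use of the operator inequality $M\succeq aI\Rightarrow M^{-1}\preceq a^{-1}I$ for positive definite $M$ and $a>0$. This follows from diagonalizing $M$, so it is not a real obstacle. Summing the two block bounds and taking the infimum over $x$ gives \cref{eq:restricted-support-minimum}, with the $W$-term absent when $W=\varnothing$. Nothing else is needed, since the decomposition inequality \cref{eq:deleted-cross-edges} was already established above.
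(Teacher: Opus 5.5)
Your proof is correct and follows the paper's argument: restrict to $\R^S$, use $L_S\succeq L'_U\oplus L'_W$, complete the square on each block, and bound each block inverse by its smallest eigenvalue. The only difference is the order of steps. You decouple the quadratic pointwise and then minimize each block. The paper instead first inverts the Löwner inequality to get $(L_S+\mu I)^{-1}\preceq(L'_U+\mu I)^{-1}\oplus(L'_W+\mu I)^{-1}$, so your version only needs the elementary fact $M\succeq aI\Rightarrow M^{-1}\preceq a^{-1}I$ rather than order-reversal of inversion in general.
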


\begin{proof}
Let $L_S$ be the principal submatrix of $L_G$ indexed by $S$.  For vectors
supported on $S$,
\[
    F_G(x)=x_S^\top(L_S+\mu I)x_S-2\one_S^\top x_S.
\]
By \cref{eq:deleted-cross-edges},
$L_S\succeq L'_U\oplus L'_W$.
After adding $\mu I$, both sides are positive definite.  For positive-definite
matrices, a larger quadratic form has a smaller inverse, so
\[
    (L_S+\mu I)^{-1}
    \preceq
    (L'_U+\mu I)^{-1}\oplus(L'_W+\mu I)^{-1}.
\]
Completing the square and applying this inverse bound gives
\begin{align*}
    \min_{\supp(x)\subseteq S}F_G(x)
    &=-\one_S^\top(L_S+\mu I)^{-1}\one_S\\
    &\geq
      -\one_U^\top(L'_U+\mu I)^{-1}\one_U
      -\one_W^\top(L'_W+\mu I)^{-1}\one_W.
\end{align*}
Since $L'_U\succeq0$ and, when $W\neq\varnothing$,
$L'_W\succeq\lambda_{\min}(L'_W)I$, the corresponding inverse bounds are
\[
    (L'_U+\mu I)^{-1}\preceq\frac1\mu I,
    \qquad
    (L'_W+\mu I)^{-1}
    \preceq\frac{1}{\lambda_{\min}(L'_W)+\mu}I
    \quad(W\neq\varnothing).
\]
Taking quadratic forms with $\one_U$ and $\one_W$, and using
$\norm{\one_U}_2^2=|U|$ and $\norm{\one_W}_2^2=|W|$, gives the claimed bounds.
\end{proof}

\begin{theorem}[NO-instance lower bound]
\label{thm:soundness}
If $G$ is a NO instance, then every $x$ with $\norm{x}_0\leq Lk$ satisfies
\begin{equation}
    F_G(x)
    \geq
    -\frac{\theta k}{c}
    -\frac{Lk}{\tfrac12(1-\theta)^2+c}.
    \label{eq:no-F-bound}
\end{equation}
Equivalently,
\begin{equation}
    Q_G(x)
    \geq
    \frac{n}{\mu}
    -\frac{\theta k}{c}
    -\frac{Lk}{\tfrac12(1-\theta)^2+c}.
    \label{eq:no-Q-bound}
\end{equation}
\end{theorem}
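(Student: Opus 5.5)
The plan is to assemble the bound from the three preceding lemmas. Fix a NO instance $G$ and any $x$ with $\norm{x}_0\leq Lk$. If $x=0$ then $F_G(x)=0$, which trivially exceeds the (negative) right-hand side of \cref{eq:no-F-bound}, so assume $x\neq0$ and set $S=\supp(x)$. Then $S$ is nonempty and $|S|\leq Lk$. Since $x$ is feasible for the restricted problem on $S$, we have
\[
    F_G(x)\geq\min_{\supp(y)\subseteq S}F_G(y),
\]
and it suffices to lower-bound this restricted minimum uniformly over all such $S$.

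Next, apply \cref{lem:greedy-decomposition} to $S$ with $\tau=1-\theta$, obtaining a partition $S=U\cup W$. By \cref{lem:core-size}, which applies because $G$ is a NO instance with $\eta=\theta^2$, $k\geq2$, $|S|\leq Lk$ and $\theta<1/(2L)$, the core satisfies $|U|<\theta k$. Feed this partition into \cref{lem:restricted-support-minimum}, which splits the argument into two cases.

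\emph{Case $W=\varnothing$.} The restricted minimum is at least $-|U|/\mu$. With $\mu=c$ and $|U|<\theta k$, this is at least $-\theta k/c$, which in turn exceeds the right-hand side of \cref{eq:no-F-bound} because the omitted $L$-term is nonnegative.

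\emph{Case $W\neq\varnothing$.} By \cref{eq:residual-spectral-gap}, $\lambda_{\min}(L'_W)\geq\tfrac12(1-\theta)^2$, and hence
\[
    \frac{|W|}{\lambda_{\min}(L'_W)+\mu}\leq\frac{|W|}{\tfrac12(1-\theta)^2+c}.
\]
Using $|W|\leq|S|\leq Lk$ together with $|U|/\mu<\theta k/c$ then yields \cref{eq:no-F-bound}.

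Finally, \cref{eq:no-Q-bound} follows by adding $n/\mu$, using \cref{eq:least-squares-realization}. There is no real obstacle here: the substantive work was done in \cref{lem:core-size}, and this theorem is bookkeeping. The only points requiring care are handling $x=0$ and the case $W=\varnothing$, and confirming that the hypotheses of \cref{lem:core-size} hold, which they do by the standing assumptions of the section.
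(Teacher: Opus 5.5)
Your proposal is correct and matches the paper's proof step for step. The paper also handles the empty support separately, applies the greedy decomposition with $\tau=1-\theta$, bounds $|U|<\theta k$ via the core lemma and $|W|\leq Lk$, and splits into the cases $W=\varnothing$ and $W\neq\varnothing$ using the restricted-support minimum and the spectral gap of $L'_W$, then adds the shift $n/\mu$.
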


\begin{proof}
Let $S=\supp(x)$.  If $S=\varnothing$, then $F_G(x)=0$, and
\cref{eq:no-F-bound} is immediate.  Otherwise, apply
\cref{lem:greedy-decomposition} with $\tau=1-\theta$ to obtain
$S=U\cup W$ with $U\cap W=\varnothing$.  By \cref{lem:core-size},
$|U|<\theta k$, and $|W|\leq|S|\leq Lk$.

If $W=\varnothing$, \cref{lem:restricted-support-minimum} gives
\[
    F_G(x)\geq-\frac{|U|}{\mu}
    \geq-\frac{\theta k}{c},
\]
which is stronger than \cref{eq:no-F-bound}.  If $W\neq\varnothing$, combine
\cref{lem:restricted-support-minimum,eq:residual-spectral-gap} with
$\mu=c$ to obtain
\begin{align*}
    F_G(x)
    &\geq-\frac{|U|}{\mu}
           -\frac{|W|}{\lambda_{\min}(L'_W)+\mu}\\
    &\geq-\frac{\theta k}{c}
           -\frac{Lk}{\tfrac12(1-\theta)^2+c}.
\end{align*}
Adding the shift $n/\mu$ proves \cref{eq:no-Q-bound}.
\end{proof}

\begin{proof}[Proof of \cref{thm:reduction-gap}]
The YES bound is \cref{lem:completeness}.  In the NO case,
\cref{thm:soundness} and the definitions in
\cref{eq:gap-threshold,eq:gamma-gap} give
\[
    Q_G(x)
    \geq Y_G+k\Gamma(c,L,\theta)
\]
for every $\norm{x}_0\leq Lk$.
\end{proof}

\section{Parameter selection and hardness}
\label{sec:main-proof}

\subsection{Choosing the constants}
\label{sec:parameters}

Fix constants $\gamma\in(0,1]$ and $C\geq1$.  We choose constants in the order
\[
    (\gamma,C)\ \longrightarrow\ c\ \longrightarrow\ L
    \ \longrightarrow\ \theta\ \longrightarrow\
    \eta=\theta^2\ \longrightarrow\ \delta_{\mathrm{SSE}}.
\]
The constant $c$ controls the ridge strength, while $L$ bounds the output
support blow-up.

\begin{lemma}[Parameter selection]
\label{lem:parameter-selection}
There exist rational constants $c\in(0,1)$, $L\geq1$, and
$\theta\in(0,1)$ such that
\[
    \Gamma(c,L,\theta)>0
    \qquad\text{and}\qquad
    \theta<\frac{1}{2L}.
\]
Moreover, if the design matrix $A$ of an instance satisfies
\begin{equation}
    \kappa_r(A)\leq K(c):=1+\frac{2}{c}
    \qquad\text{for every }r,
    \label{eq:uniform-kappa-bound}
\end{equation}
then the output of a $(C,\gamma)$ algorithm has support at most $Lk$ on every
run on which its support guarantee holds.
\end{lemma}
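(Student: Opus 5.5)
We choose the constants in the order indicated above: $c$ first, then $L$ from $c$, then $\theta$ from $L$.

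\textbf{Choosing $c$.} The key observation is that as $c\to0$ the completeness coefficient $1/(\theta^2+c)$ grows like $1/c$ once $\theta^2\ll c$. The dominant soundness term $L/(\tfrac12(1-\theta)^2+c)$ stays $O(L)$. The only danger is that $L$ itself grows as $c\to0$: we need $L\geq C(1+2/c)^{1-\gamma}$, which is of order $C(3/c)^{1-\gamma}$ for $c<1$. So $L\asymp c^{-(1-\gamma)}$, which is $o(1/c)$ precisely because $\gamma>0$. This is the single place where $\gamma>0$ enters, and it is the main point of the proof.

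\textbf{Choosing $L$.} We take $L$ to be a rational number at least $\max\{1, C(1+2/c)^{1-\gamma}\}$, say within a factor $2$ of this quantity. Then
\[
    L\leq 2C(3/c)^{1-\gamma}.
\]
For the second clause of the lemma, note that on any run where the support guarantee holds, \cref{eq:uniform-kappa-bound} gives
\[
    s\leq Ck\kappa_{s+k}^{1-\gamma}\leq CkK(c)^{1-\gamma}\leq Lk.
\]
This uses $1-\gamma\geq0$ and $\kappa\geq1$, so that raising to the power $1-\gamma$ is monotone.

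\textbf{Choosing $\theta$.} Having fixed $c$ and $L$, we pick rational $\theta$ small enough that three conditions hold:
\[
    \theta<\frac{1}{2L},\qquad \theta^2\leq c,\qquad \frac{\theta}{c}\leq\frac{1}{8c}.
\]
The last condition is just $\theta\leq 1/8$. With these choices,
\[
    \frac{1}{\theta^2+c}\geq\frac{1}{2c},\qquad \frac{\theta}{c}\leq\frac{1}{8c},\qquad \frac{L}{\tfrac12(1-\theta)^2+c}\leq\frac{L}{\tfrac12(7/8)^2}<3L.
\]
Hence
\[
    \Gamma\geq\frac{3}{8c}-3L\geq\frac{3}{8c}-6C3^{1-\gamma}c^{-(1-\gamma)}.
\]
This is positive once $c^{\gamma}<1/(16\cdot 3^{1-\gamma}C)$.

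\textbf{Ordering of the choices.} The real order is therefore: pick a rational $c\in(0,1)$ with $c^\gamma$ below that threshold, which depends only on $(C,\gamma)$; then $L$; then $\theta$. All the required inequalities are strict or have slack, so rational choices exist.

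The main obstacle is this balancing of $L$ against $1/c$. If $\gamma=0$, then $L$ and $1/c$ are of the same order and $\Gamma$ need not be positive. I therefore expect the verification to consist of explicitly tracking the exponent $1-\gamma<1$ in the bound on $L$. The remaining inequalities are routine.
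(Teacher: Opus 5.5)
Your proof is correct and follows essentially the paper's route: choose $c$ small so that $L\asymp c^{-(1-\gamma)}$ (hence $cL\to0$) is beaten by the $1/c$ completeness term, then take $\theta$ small. The only difference is that the paper finishes by continuity at $\theta=0$ using $cL<1/2$, whereas you give explicit thresholds ($\theta\le\min\{1/8,\sqrt{c}\}$, $\theta<1/(2L)$, and $c^{\gamma}<1/(16\cdot 3^{1-\gamma}C)$), which is a quantitative version of the same argument.
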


\begin{proof}
Let
\[
    L_0(c)=\max\left\{1,
        C\left(1+\frac{2}{c}\right)^{1-\gamma}\right\}.
\]
As $c\to0^+$,
\[
    cC\left(1+\frac{2}{c}\right)^{1-\gamma}
    =Cc^\gamma(2+c)^{1-\gamma}\longrightarrow0.
\]
Since also $c\to0$, it follows that $c(L_0(c)+1)\to0$.  We may therefore
choose a rational $c\in(0,1)$ so small that
$c(L_0(c)+1)<1/2$.  Choose a rational $L$ with
$L_0(c)\leq L<L_0(c)+1$.  In particular, $cL<1/2$.

At $\theta=0$, the first term in \cref{eq:gamma-gap} is $1/c$, whereas
the last term is $L/(1/2+c)$.  Since $cL<1/2<1/2+c$, we have
\[
    \frac1c>\frac{L}{1/2+c}.
\]
By continuity and the density of the rationals, a sufficiently small rational
$\theta>0$ makes $\Gamma(c,L,\theta)$ positive and satisfies
\cref{eq:theta-small}.

Finally, suppose that \cref{eq:uniform-kappa-bound} holds and that $x$ is the
output on a run on which the support guarantee in
\cref{eq:algorithm-definition} holds.  With $r=\norm{x}_0$, we have
\[
    r\leq Ck\kappa_{r+k}(A)^{1-\gamma}
      \leq CkK(c)^{1-\gamma}
      \leq L_0(c)k
      \leq Lk.
\]
\end{proof}

For the rest of the paper, fix $c,L,\theta$ as in
\cref{lem:parameter-selection}, set $\eta=\theta^2\in(0,1/2)$, and invoke
\cref{hyp:exact-sse} with this value of $\eta$.  These constants are fixed for
the chosen $(C,\gamma)$ and are hard-wired into the reduction.  Write
$k=\delta_{\mathrm{SSE}}n$ and restrict attention to sufficiently large
admissible instances, so in particular $k\geq2$.

Since $\delta_{\mathrm{SSE}}\leq\eta=\theta^2$ and
$\theta<1/(2L)$,
\[
    L\delta_{\mathrm{SSE}}
    \leq L\theta^2<1,
\]
and hence $Lk<n$.  Thus the dense zero-residual solution $\mu^{-1}\one$ lies
outside the support range of the NO-instance lower bound.

\subsection{Proof of the main theorem}

Define the positive gap
\begin{equation}
    \Delta_G=k\Gamma(c,L,\theta)>0,
    \label{eq:decision-gap}
\end{equation}
where $\Gamma$ is from \cref{eq:gamma-gap}.  By
\cref{thm:reduction-gap}, the stated YES upper bound and NO lower bound are
separated by $\Delta_G$.

Set
\begin{equation}
    \eps_G=\frac{\Delta_G}{3}.
    \label{eq:accuracy-choice}
\end{equation}

\begin{proof}[Proof of \cref{thm:main-intro}]
Suppose for contradiction that an algorithm ruled out by
\cref{thm:main-intro} exists.  It is a $(C,\gamma)$ algorithm in the sense of
\cref{def:algorithm}.  Given a graph $G$ from the promise problem fixed above,
the finitely many smaller input sizes are handled directly by enumerating their
$k$-element subsets, so assume that $n$ lies in the sufficiently large regime.
Construct $(A_G,b_G)$ as in \cref{lem:rational-realization}.  By
\cref{lem:condition-number,lem:rational-realization}, $A_G$ has full column
rank, every row of $A_G$ has at most two nonzero entries, and
\[
    \kappa_r(A_G)\leq K(c)
    \qquad\text{for every }r.
\]
Because $c$ was fixed as a function only of $(C,\gamma)$, this proves the
structural assertion in the theorem with $K_{C,\gamma}=K(c)$.
Call the algorithm on
\[
    (A_G,b_G,k,\eps_G),
\]
let $x$ be the returned vector, and let $\mathcal E$ be the event that both
guarantees in \cref{eq:algorithm-definition} hold.  By the assumed success
guarantee,
\begin{equation}
    \Pr(\mathcal E)\geq\frac23.
    \label{eq:success-event}
\end{equation}
Hence, on $\mathcal E$, \cref{lem:parameter-selection} gives
$\norm{x}_0\leq Lk$.

Recall $Y_G$ from \cref{eq:gap-threshold}.  
Output YES if and only if
\begin{equation}
    Q_G(x)<Y_G+\frac{2\Delta_G}{3};
    \label{eq:decision-rule}
\end{equation}
otherwise, output NO.
If $G$ is a YES instance, then on $\mathcal E$ the additive guarantee and
\cref{thm:reduction-gap} give
\[
    Q_G(x)\leq\min_{\norm{z}_0\leq k}Q_G(z)+\eps_G
    \leq Y_G+\frac{\Delta_G}{3},
\]
so the rule outputs YES.  If $G$ is a NO instance, then on $\mathcal E$ the
support bound allows us to apply \cref{thm:reduction-gap}, yielding
\[
    Q_G(x)\geq Y_G+\Delta_G,
\]
so the rule outputs NO.  Thus the decision rule is correct whenever
$\mathcal E$ occurs.

The constructed instance has a polynomial-size rational encoding by
\cref{lem:rational-realization}.  Since $c,L,\theta$ are fixed rational
constants, all quantities in the decision rule also have polynomial-size
rational encodings.  Moreover, $k\geq1$ and $\Gamma(c,L,\theta)>0$, so
\[
    \frac1{\eps_G}
    =\frac{3}{k\Gamma(c,L,\theta)}
    \leq\frac{3}{\Gamma(c,L,\theta)}.
\]
Thus the call to the hypothetical algorithm remains polynomial-time.  Its
rational output $x$ has encoding length bounded by its running time, so
$\norm{x}_0$ and $Q_G(x)$ can be computed exactly in polynomial time.  The
decision rule would therefore solve the SSE promise in randomized polynomial
time with success probability at least $2/3$, contradicting
\cref{hyp:exact-sse}.
\end{proof}

\begin{corollary}[Hardness at fixed additive accuracy]
\label{cor:fixed-accuracy}
Fix any rational constant $\eps_0>0$.  Under \cref{hyp:exact-sse}, the
conclusion of \cref{thm:main-intro} continues to hold even for algorithms
specialized to the single accuracy value $\eps=\eps_0$.
\end{corollary}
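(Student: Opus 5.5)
The plan is to rescale the instance so that the gap $\Delta_G=k\Gamma(c,L,\theta)$ becomes at least $3\eps_0$, leaving the reduction otherwise unchanged. Multiplying $A_G$ and $b_G$ by a common rational factor $\beta>0$ multiplies $Q_G$ by $\beta^2$ and leaves the restricted condition numbers unchanged, since $\kappa_r$ is a ratio of quadratic forms in $A$. It also preserves full column rank and the property that each row has at most two nonzero entries.

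The scaled instance then has YES threshold $\beta^2Y_G$ and NO lower bound $\beta^2(Y_G+\Delta_G)$, both by \cref{thm:reduction-gap}, which is scale-free after multiplication. The support bound $\norm{x}_0\leq Lk$ on the success event still follows from \cref{lem:parameter-selection}, because the uniform bound $\kappa_r\leq K(c)$ is unaffected by scaling.

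For rationality, $\beta^2$ must be a rational square. The simplest choice is an integer $\beta$ with $\beta^2\geq 3\eps_0/(k\Gamma(c,L,\theta))$. Since $k\geq1$ and $\Gamma$ and $\eps_0$ are fixed constants, the integer $\beta=\lceil\sqrt{3\eps_0/\Gamma}\,\rceil+1$ works and does not depend on $G$. Scaling therefore adds only constant bit overhead.

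The hypothetical algorithm, specialized to $\eps=\eps_0$, is called on $(\beta A_G,\beta b_G,k,\eps_0)$. The decision rule becomes: output YES if and only if $\beta^2Q_G(x)<\beta^2Y_G+\eps_0\cdot\tfrac{3}{2}$. More precisely, we use any threshold strictly between $\beta^2Y_G+\eps_0$ and $\beta^2(Y_G+\Delta_G)$. The condition $\beta^2\Delta_G\geq3\eps_0>\eps_0$ guarantees that this interval is nonempty, and $\beta^2Y_G+2\eps_0$ is a convenient choice.

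Correctness on the event $\mathcal E$ follows as in the proof of \cref{thm:main-intro}. In a YES instance, $\beta^2Q_G(x)\leq\beta^2Y_G+\eps_0$. In a NO instance, $\beta^2Q_G(x)\geq\beta^2Y_G+3\eps_0$.

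The running time is polynomial because the algorithm itself is polynomial in the encoding length, and $1/\eps_0$ is a constant. The only point that needs care is that $\beta$ is chosen uniformly in $G$ and is rational, so that exactness and polynomial encoding are preserved. This is the main potential obstacle, and the integer choice above resolves it.
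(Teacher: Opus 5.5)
Your proof is correct, but it reaches the conclusion by a different mechanism than the paper.

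The paper does not modify the instance at all. It observes that $\Delta_G = k\Gamma(c,L,\theta) = \delta_{\mathrm{SSE}}\, n\,\Gamma(c,L,\theta)$ grows linearly in $n$. So on all sufficiently large members of the hard family, which \cref{hyp:exact-sse} guarantees exist, one already has $\Delta_G > 3\eps_0$. The original decision rule $Q_G(x) < Y_G + 2\Delta_G/3$ then works verbatim with the algorithm called at accuracy $\eps_0$.

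You instead use the homogeneity $(A,b)\mapsto(\beta A,\beta b)$. This multiplies $Q_G$ and the $k$-sparse optimum by $\beta^2$ while leaving every $\kappa_r$, full column rank, and the two-nonzeros-per-row structure unchanged. You then choose a constant integer $\beta$ with $\beta^2\Gamma \geq 3\eps_0$.

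The paper's route is shorter and keeps the instance literally the same, so the structural clause of \cref{thm:main-intro} transfers with no comment. However, it relies on $k$ growing with $n$, and its ``sufficiently large'' threshold depends on $\eps_0$. Your route makes the gap exceed $3\eps_0$ uniformly on every admissible instance using only $k\geq 1$. The cost is a constant-size increase in encoding length and a rescaled threshold. Your two suggested thresholds, $\beta^2 Y_G + \tfrac32\eps_0$ and $\beta^2 Y_G + 2\eps_0$, both lie strictly between $\beta^2 Y_G+\eps_0$ and $\beta^2 Y_G + 3\eps_0$, so the small inconsistency in how you state the rule is harmless.
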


\begin{proof}
Since
\[
    \Delta_G
    =k\Gamma(c,L,\theta)
    =\delta_{\mathrm{SSE}}n\Gamma(c,L,\theta),
\]
we have $\Delta_G>3\eps_0$ for all sufficiently large $n$ in the hard family.
On those instances, repeat the proof of \cref{thm:main-intro}, calling the
hypothetical algorithm at its fixed accuracy $\eps_0$.  In the YES case,
\[
    Q_G(x)
    \leq Y_G+\eps_0
    <Y_G+\frac{\Delta_G}{3},
\]
whereas the NO-case bound is unchanged.  Hence the same decision rule
distinguishes the two cases, contradicting \cref{hyp:exact-sse}.
\end{proof}

\section*{Acknowledgements}

The proof was first obtained using a fully automated Gemini-based agentic
system developed internally at Google. The authors have verified the proof and
edited it for clarity of presentation, and take responsibility for the final
version. The authors would like to thank Kyriakos Axiotis for helpful
discussions.
\bibliographystyle{alpha}
\bibliography{references}

\end{document}